\newtheorem{strategy}{Strategy}    
\newtheorem{theorem}{Theorem}
\newtheorem{definition}{Definition}
\begin{document}

\title{Efficient Mining of Low-Utility Sequential Patterns}

\author{Jian Zhu,~\IEEEmembership{Member,~IEEE}, Zhidong Lin, Wensheng Gan,~\IEEEmembership{Member,~IEEE}, Philip S. Yu,~\IEEEmembership{Life Fellow,~IEEE}

\thanks{This research was supported in part by National Natural Science Foundation of China (No. 62237001 and No. 62272196), Guangzhou Basic and Applied Basic Research Foundation (No. 2024A04J9971). (Corresponding author: Wensheng Gan)} 
    
\thanks{Jian Zhu and Zhidong Lin are with the School of Computer Science and Technology, Guangdong University of Technology, Guangzhou 510006, China. (E-mail: dr.zhuj@gmail.com, zhidonglin5@gmail.com, cairuichu@gdut.edu.cn)} 

\thanks{Wensheng Gan is with the College of Cyber Security, Jinan University, Guangzhou 510632, China. (E-mail: wsgan001@gmail.com)} 

\thanks{Philip S. Yu is with the Department of Computer Science, University of Illinois Chicago, Chicago, USA. (E-mail: psyu@uic.edu)} 
}
\markboth{}%
{Zhu \MakeLowercase{\textit{et al.}}: Efficient Mining of Low-Utility Sequential Patterns}

\maketitle

\begin{abstract}
    Discovering valuable insights from rich data is a crucial task for exploratory data analysis. Sequential pattern mining (SPM) has found widespread applications across various domains. In recent years, low-utility pattern mining has shown strong potential in applications such as intrusion detection and genomic sequence analysis. However, existing utility-based sequential pattern mining studies mainly focus on high-utility sequential patterns, and the definitions and strategies used in high-utility sequential pattern mining cannot be directly applied to low-utility sequential pattern mining (LUSPM). Moreover, there is currently no algorithm specifically designed for mining low-utility sequential patterns. To this end, we formalize the LUSPM problem, redefine sequence utility, and introduce a compact data structure called the sequence-utility chain to efficiently record utility information. Furthermore, we propose three novel algorithms—LUSPM$_{b}$, LUSPM$_{s}$, and LUSPM$_{e}$—to discover the complete set of low-utility sequential patterns. LUSPM$_{b}$ serves as an exhaustive baseline, while LUSPM$_{s}$ and LUSPM$_{e}$ build upon it, generating subsequences through shrinkage and extension operations, respectively. In addition, we introduce the maximal non-mutually contained sequence set and incorporate multiple pruning strategies, which significantly reduce redundant operations in both LUSPM$_{s}$ and LUSPM$_{e}$. Finally, extensive experimental results demonstrate that both LUSPM$_{s}$ and LUSPM$_{e}$ substantially outperform LUSPM$_{b}$ and exhibit excellent scalability. Notably, LUSPM$_{e}$ achieves superior efficiency, requiring less runtime and memory consumption than LUSPM$_{s}$. Our code is available at https://github.com/Zhidong-Lin/LUSPM.
\end{abstract}

\begin{IEEEkeywords}
    Pattern Mining, Sequential Pattern, Low Utility, Pruning Strategy.
\end{IEEEkeywords}

\section{Introduction}

\IEEEPARstart{W}ith the advent of big data, the demand for processing large-scale datasets and extracting valuable knowledge has increased significantly. Data mining and analytics \cite{chen2002data} have emerged as crucial technologies for uncovering essential knowledge from diverse data sources. Pattern mining \cite{han2001prefixspan,chen2024towards} has been widely applied to identify meaningful patterns, including itemsets \cite{li2022frequent,tung2025mining,chen2025toward}, sequences \cite{agrawal1995mining,qiu2021efficient,zhang2021tkus,gan2019survey}, and rules \cite{gan2023anomaly,zhu2024targeted}. Among these, early studies on sequential pattern mining (SPM) focused primarily on the sequence's frequency. However, frequency alone may overlook other important aspects, motivating the development of utility-based SPM \cite{yin2012uspan,gan2020proum,wang2018incremental}. Although numerous utility-based SPM algorithms have been proposed, existing research has largely focused on high-utility sequential pattern mining (HUSPM) \cite{shie2011mining,lan2014applying,alkan2015crom}, while low-utility sequential pattern mining (LUSPM) remains underexplored. It is important to note that low utility does not imply low importance. In practice, utility reflects quantitative contribution (e.g., economic profit or accumulated value) rather than informational significance. Low-utility patterns often correspond to rare or subtle behaviors that deviate from dominant patterns and may indicate anomalies, security threats, system faults, or abnormal biological processes. For example, in intrusion detection, individual malicious login attempts may have low utility but form suspicious sequences in specific orders; in genomic analysis, low-expression DNA/RNA subsequences may reveal abnormal regulatory mechanisms. Such patterns are typically filtered out by threshold-based HUSPM methods, highlighting the need for dedicated LUSPM techniques. Despite its practical and theoretical importance, to the best of our knowledge, no dedicated algorithms have been developed for LUSPM. Therefore, this paper presents the first systematic study of LUSPM, formally defining the problem and proposing effective techniques to discover low-utility yet information-rich sequential patterns. Nevertheless, this task poses several fundamental challenges.

\begin{table}[t!]
	\centering
    \scriptsize
	\caption{A quantitative sequence database}
	\label{table2}
	\begin{tabular}{|c|c|}  
		\hline 
		$S_{id}$ & \textbf{q-Sequence}\\		\hline  
		$S_{1}$& $\langle$($a$: 1), ($b$: 2), ($c$: 1), ($a$: 2), ($b$: 3), ($d$: 3), ($a$: 3)$\rangle$\\ 
		\hline
		$S_{2}$& $\langle$($g$: 1), ($a$: 1), ($b$: 2), ($c$: 2), ($d$: 1)$\rangle$\\  
		\hline  
		$S_{3}$& $\langle$($e$: 2), ($f$: 2), ($a$: 1), ($b$: 2), ($e$: 2)$\rangle$\\
		\hline  
		$S_{4}$& $\langle$($f$: 1), ($e$: 2), ($a$: 2), ($b$: 2), ($a$: 2), ($b$: 2)$\rangle$\\
		\hline
		$S_{5}$& $\langle$($d$: 2), ($a$: 1), ($c$: 3), ($d$: 2)$\rangle$\\ 
		\hline
 $S_{6}$&$\langle$($c$: 1), ($a$: 2), ($d$: 3), ($a$: 3)$\rangle$\\\hline
	\end{tabular}
\end{table}

At first, the conventional sequence utility definition in HUSPM is not well suited for LUSPM. In most HUSPM algorithms, the utility of a sequence is typically defined as the maximum, minimum, or average utility across transaction sequences~\cite{zhang2023husp,truong2020ehusm,truong2020ehausm}, whereas LUSPM requires the total utility over all occurrences. This difference may lead to misleading evaluations because the conventional definition captures only partial utility information. For example, assume that the utility threshold is 6 and that the internal utility equals the actual item utility. In Table~\ref{table2}, the sequence $S$ = $\langle$$d$, $a$$\rangle$ occurs once in $S_1$ with utility 3 + 3 = 6, yielding $u$($S$, $S_1$) = $max$(6) = 6 $=$ 6, and thus it is considered high-utility. By contrast, $Q$ = $\langle$$a$, $b$$\rangle$ appears three times in $S_1$ with utilities 1 + 2 = 3, 1 + 3 = 4, and 2 + 3 = 5, giving $u$($Q$, $S_1$) = $max$(3, 4, 5) = 5 $<$ 6, and therefore it is not identified as high-utility. However, the total utility of $S$ is 6, whereas that of $Q$ is 12, which is substantially larger. This example shows that the conventional HUSPM utility definition captures only partial utility and may overlook sequences with greater overall contribution.

Moreover, LUSPM faces challenges in computational efficiency and memory consumption. On the one hand, similar to HUSPM, calculating sequence utilities requires comprehensive information from the database, which substantially increases both computational and memory costs. On the other hand, the discovery process generates a substantial number of candidate sequences. Existing pruning strategies in HUSPM are designed to preserve sequences whose utilities exceed a given threshold by leveraging utility upper bounds. However, since LUSPM aims to discover sequences whose utilities fall below the threshold, these properties no longer support effective pruning, rendering the existing strategies inapplicable or even unsafe in this setting.

To address these challenges and improve the efficiency of LUSPM, this paper first investigates the feasibility of applying existing HUSPM algorithms to mine low-utility sequences. The results show that this approach is infeasible (see Section \ref{Investigation of Mining LUSPs Using HUSPM Methods}). To overcome this limitation, we redefine sequence utility to enable the accurate discovery of low-utility sequential patterns. Specifically, sequence utility is defined as the sum of utilities across all transactional sequences, thereby reflecting the true utility of a sequence in the database. Based on this definition, we propose a simple algorithm, LUSPM$_{b}$, to discover the complete set of LUSPs. In particular, LUSPM$_{b}$ adopts an exhaustive search strategy to identify LUSPs and introduces a novel data structure called the sequence-utility (SU) chain to precisely capture sequence utility information. However,LUSPM$_{b}$ suffers from high computational cost and low efficiency.

In order to address this problem, we propose two improved algorithms, LUSPM$_{s}$ and LUSPM$_{e}$, to more effectively mine LUSPs. LUSPM$_{s}$ is a shrinkage-based algorithm that derives shorter sequences by progressively removing items from longer sequences, whereas LUSPM$_{e}$ is an extension-based algorithm that constructs longer sequences by inserting items into shorter ones. To reduce redundant operations, both algorithms introduce the maximal non-mutually contained sequence set (MaxNonConSeqSet) to prune invalid sequences. In addition, we propose four pruning strategies called EUPS, SLUSPS, SBIPS, and EBISPS to improve efficiency. Specifically, EUPS is applied during preprocessing to eliminate invalid items in the MaxNonConSeqSet. SLUSPS and SBIPS are employed in LUSPM$_{s}$ to avoid unnecessary utility computations and prune invalid items, while EBISPS is used in LUSPM$_{e}$ to prune a substantial number of invalid sequences. Overall, the integration of these pruning strategies enables both algorithms to efficiently mine LUSPs. The key contributions of this paper are as follows:

\begin{itemize}
    \item To address the task of discovering low-utility yet informative sequential patterns, we introduce the concept of LUSPs, redefine the sequence utility, and formalize the LUSPM problem. To our knowledge, this is the first study focusing on LUSPM.
    \item We develop a basic algorithm called LUSPM$_{b}$ that utilizes a structure called sequence-utility chain to capture the utility information and is capable of mining the complete set of LUSPs.
    \item Building on LUSPM$_{b}$, we develop two improved algorithms, LUSPM$_{s}$ and LUSPM$_{e}$, which leverage MaxNonConSeqSet and four pruning strategies to significantly enhance the mining efficiency of LUSPs.
    \item We conduct extensive experiments on six datasets, and the results show that both LUSPM$_s$ and LUSPM$_e$ significantly outperform LUSPM$_{b}$ with great scalability, and LUSPM$_e$ achieves superior runtime and memory efficiency compared with LUSPM$_{s}$.
\end{itemize}

The paper is organized as follows: Section \ref{sec: relatedwork} reviews related work; Section \ref{sec: preliminaries} introduces basic concepts and problem definition; Section \ref{Investigation of Mining LUSPs Using HUSPM Methods} investigates the feasibility of using HUSPM methods to mine LUSPs; Section \ref{sec: algorithm} presents the proposed algorithms; Section \ref{sec: experiments} shows experimental results; Section \ref{sec: conclusion} provides conclusions and future research directions. 

\section{Related Work}  \label{sec: relatedwork}
\subsection{Frequent Sequential Pattern Mining} \label{subsec:fspm}

As a key part of exploratory data analysis, pattern mining extracts meaningful patterns, such as itemsets, sequences, and rules, from databases \cite{fournier2022pattern}. Among them, sequences specifically capture the temporal order between items. Sequential pattern mining (SPM) was first proposed to identify useful sequential patterns \cite{agrawal1995mining}, which can be used in customer shopping, traffic, web access, stock trends, and DNA analysis \cite{han2001prefixspan,srikant1996mining}.  Since then, many algorithms have been proposed to discover frequent sequential patterns. The early well-known SPM algorithm was AprioriAll \cite{agrawal1995mining}, which relied on the Apriori property of frequent sequences. However, it faced efficiency issues when handling large-scale data. To improve efficiency, FreeSpan \cite{han2000freespan} introduced the projected sequence database to constrain subsequence exploration and reduce candidate generation. However, generating projected databases incurred high costs. PrefixSpan \cite{han2001prefixspan} improved efficiency by recursively using frequent sequences as prefixes and projecting databases to narrow the search space. Additionally, SPM, which uses a bitmap representation (SPAM) \cite{ayres2002sequential}, was proposed for support calculation, thereby reducing memory usage. Finally, based on SPAM, CM-SPAM \cite{fournier2014fast} incorporated CMAP and co-occurrence pruning to further enhance performance. 

Traditional SPM algorithms often generate numerous less meaningful sequences. To address this issue, more advanced algorithms have been developed. Among them, closed SPM algorithms \cite{wu2020netncsp,fournier2014fast,fumarola2016clofast} and maximal SPM \cite{li2022netnmsp,fournier2013mining} reduce the number of mined frequent patterns through pattern compression. In addition, top-k sequential pattern mining (TSPM) \cite{petitjean2016skopus,fournier2013tks} and targeted sequential pattern mining (TaSPM) \cite{chiang2003goal,hu2024targeted} can reduce the number of sequences based on user requirements. Specifically, TSPM identifies the top-k frequent patterns that satisfy user-defined constraints, whereas TaSPM extracts sequences containing a user-specified target frequent sequence \cite{chiang2003goal,hu2024targeted}. However, frequency is not always a sufficient measure of pattern interestingness, since it disregards key aspects such as profitability, cost, and risk. This has led to the emergence of utility-based SPM \cite{gan2021survey}.

\subsection{High-Utility Sequential Pattern Mining} \label{subsec:huspm}

Traditional utility-based algorithms focus on high-utility sequential patterns (HUSPs), designed to identify sequences with utilities exceeding a given threshold. Unlike frequent-based SPM, high-utility SPM doesn't possess the Apriori property, which results in a huge search space and low efficiency. Initially, Shie \textit{et al.} \cite{shie2011mining} proposed the UMSP and UM-span algorithms to meet the needs of mobile business applications. Then the formalization of HUSPM was presented \cite{yin2012uspan}, alongside their efficient USpan algorithm \cite{yin2012uspan}. Utilizing pruning strategies, USpan reduced the search space and improved efficiency. Despite this, USpan cannot discover the complete set of HUSPs. To address this limitation, Lan \textit{et al.} \cite{lan2014applying} proposed a sequence-utility upper-bound, which can discover complete HUSPs. Then, HuspExt \cite{alkan2015crom} obtained a smaller upper bound by calculating the cumulated rest of match to narrow the search space. Wang \textit{et al.} \cite{wang2016efficiently} then proposed HUS-Span, which introduces two upper bounds—PEUs and RSUs—to further reduce unpromising candidates. However, the problem of large candidate sets still exists. Consequently, the projection-based ProUM method was proposed, which efficiently mines HUSPs based on a utility-list structure \cite{gan2020proum}. However, it remains insufficiently compact, and the pruning strategies employed are not sufficiently robust. Therefore, the HUSP-ULL algorithm \cite{gan2020fast} was proposed using the UL-list to discover HUSPs more efficiently. The previously mentioned algorithms are still restricted by memory usage limitations. Then, the HUSP-SP algorithm \cite{zhang2023husp} proposed a new utility upper bound called TRSU and significantly reduced the number of candidate patterns. 

In addition to these common high-utility sequential patterns, the TUS algorithm \cite{yin2013efficiently} focused on mining the top-k sequences based on user requirements. IncUSP-Miner+  \cite{wang2018incremental} was proposed to discover HUSPs incrementally. In addition, the previously mentioned algorithms all calculated utility under an optimistic scenario, i.e., the utility of a sequence was defined as the sum of the maximum utilities among all its occurrences, which may overestimate the actual utility of the pattern. Truong \textit{et al.}  \cite{truong2020ehusm} proposed utility calculation under a pessimistic scenario, where the utility of a sequence was defined as the sum of the minimum utilities among all its occurrences. Besides, the high average-utility sequence mining problem was also formulated \cite{truong2020ehausm}. However, these utility calculation strategies only capture partial utility information of a sequence and do not consider the complete utility information across all its occurrences. Moreover, while numerous studies have addressed high-utility or frequent sequential patterns, none have systematically investigated the LUSPM task. 

\subsection{Low-Utility Itemset Mining} \label{subsec:luipm}
While most utility mining research focuses on high-utility patterns, there is growing interest in low-utility patterns for anomaly detection. Low-utility itemset mining (LUIM) helps identify abnormal patterns, making it valuable in retail, healthcare, and fraud detection. However, upper-bound pruning strategies from high-utility pattern mining are unsuitable for low-utility pattern mining, as they may eliminate meaningful low-utility patterns. In 2019, Alhusaini \textit{et al.} \cite{alhusaini2019luim} first formulated the LUIM problem and proposed two algorithms: LUG-Miner and LUIMA. Specifically, LUG-Miner extracts high-utility generators and low-utility generators (LUGs), and LUIMA obtains LUIs using LUGs. However, the two algorithms could not discover complete results. Zhang \textit{et al.} \cite{zhang2025enabling} proposed the LUIMiner algorithm, which incorporates two lower bounds and pruning strategies to drastically narrow the search space of LUIM and redesigned a search tree to reorganize the traversal logic of LUIM. Despite existing advances, current studies are limited to low-utility itemset mining, with no work on low-utility sequential pattern mining. Unlike itemsets, sequential patterns impose ordering constraints and more complex matching, resulting in higher computational complexity and an exponentially growing search space. Thus, LUIM techniques are not directly applicable to LUSPM, where candidate explosion and computational cost are more severe. Therefore, dedicated algorithms are required for efficient low-utility sequential pattern mining.
\section{Preliminaries}  \label{sec: preliminaries}

In this section, we present the fundamental definitions and concepts related to the low-utility sequential pattern mining (LUSPM) problem.

\subsection{Concepts and Definitions}

Let $I$ = \{$i_1$, $i_2$, $\ldots$, $i_m$\} be a set of distinct items. A sequence $T$ = $\langle$$i_1$, $i_2$, $\ldots$, $i_n$$\rangle$ is an ordered list of items. A quantitative item ($q$-item) is denoted as ($i_k$: $q_k$), where $q_k$ represents its internal utility. A quantitative sequence ($q$-sequence) $S$ is an ordered list of $q$-items: $\langle$($i_1$: $q_1$), ($i_2$: $q_2$), $\ldots$, ($i_n$: $q_n$)$\rangle$. A quantitative sequential database $\mathcal{D}$ = \{$S_1$, $S_2$, $\ldots$, $S_n$\} contains multiple $q$-sequences, each with a unique identifier ${S_{id}}$. Each item also has an external utility, denoted as $ex(i_j)$. For example, consider the database $\mathcal{D}$ = \{$S_{1}$, $S_{2}$, $\ldots$, $S_{6}$\}, which contains seven items (i.e., $I$ = \{$a$, $b$, $c$, $d$, $e$, $f$, $g$\}) as shown in Table \ref{table2}. Each item is associated with an external utility, as listed in Table \ref{table3}.

\vspace{-20px}
\begin{table}[H]
	\centering
	\scriptsize
\caption{An external utility table}
\label{table3}
	\begin{tabular}{|>{\centering\arraybackslash}p{0.2\linewidth}|c|c|c|c|c|c|c|}  
		\hline 
		\textbf{item}& a &b & c& d &e& f&g\\
		\hline  
		\textbf{external utility}& 1 &3 & 1& 2 &1& 3&3\\\hline
	\end{tabular}
\end{table}

\begin{definition}[Q-sequence Containing \cite{yin2012uspan}]
    \rm A q-sequence $Q$ is a subsequence of $S$ (denoted as $Q$ $\subseteq$ $S$) if all q-items in $Q$ appear in $S$ in the same order; conversely, if $S$ is a subsequence of $Q$, then $S$ is also called a super-sequence of $Q$ (denoted as $S$ $\subseteq$ $Q$).
\end{definition}

\begin{definition}[Matching \cite{yin2012uspan}]\label{def:matching}
    \rm A sequence $T$ matches a q-sequence $Q$ (denoted as $T$ $\sim$ $Q$) if they contain the same items in the same order; a sequence $T$ may correspond to multiple $Q$.
\end{definition}

For example, let $S_6$ = $\langle$($c$: 1), ($a$: 2), ($d$: 3), ($a$: 3)$\rangle$ and $Q_1$ = $\langle$($c$: 1), ($a$: 2)$\rangle$. Then $Q$ $\subseteq$ $S_6$ and $S_6$ is a super-sequence of $Q_1$. Let $T$ = $\langle$$a$, $b$, $c$, $a$$\rangle$. Then $T$ matches $Q_2$ = $\langle$($a$: 1), ($b$: 2), ($c$: 1), ($a$: 2)$\rangle$, but does not match $Q_3$ = $\langle$($a$: 1), ($b$: 2), ($c$: 2)$\rangle$.

\begin{definition}[Length of Sequence \cite{yin2012uspan}]
    \label{def:length}
    \rm For a $q$-sequence $S$ = $\langle$($i_1$: $q_1$), ($i_2$: $q_2$), $\ldots$, ($i_n$: $q_n$)$\rangle$, its length $|S|$ is defined as the number of $q$-items it contains, which is $n$. 
\end{definition}

For example, for the sequences $Q_2$ and $Q_3$ in Definition \ref{def:matching}, their lengths are $|Q_2|$ = 4 and $|Q_3|$ = 3.

\begin{definition}[Support of Sequence \cite{agrawal1995mining}]
    \rm Given a sequence $T$ = $\langle$$j_{1}$, $j_{2}$, $\ldots$, $j_{m}$$\rangle$, the support of sequence $T$, denoted as \textit{sup}$(T)$, is the number of times $T$ appears in the sequence database.
\end{definition}

For example, in Table \ref{table2}, the sequence $T$ = $\langle$$a$, ${b}$, ${c}$$\rangle$ appears once in both $S_{1}$ and  $S_{2}$. Hence, \textit{sup}$(T)$ = \textit{sup}($\langle$${a}$, ${b}$, ${c}$$\rangle$) = 2.

\begin{definition}[Utility of Q-item]
    \rm The utility of a $q$-item ($i$: $p$) at position $j$ - 1 in $S$ is defined as:
        \begin{flalign}
    &&
       u(i, j - 1, S) = q(i, j - 1, S) \times {ex}(i).
    &&
    \end{flalign}
\end{definition}

For example, referring to Table \ref{table3}, consider $S_6$ = $\langle$($c$: 1), ($a$: 2), ($d$: 1), ($a$: 3)$\rangle$. The utility of the $q$-item ($a$: 2) at the second position of $S_6$ is calculated as $u$($a$, 1, $S_6$) = 2 $\times$ 1 = 2.

\begin{definition}[Utility of Sequence \cite{yin2012uspan}]
    \label{Utility of sequence}
    \rm Consider a $q$-sequence  $S$ in the $q$-database $\mathcal{D}$ = \{$S_1$, $S_2$, $\ldots$, $S_n$\} and its subsequence $Q$. The utility of $Q$ in $S$ is defined as:
    \begin{flalign}
    &&
        u(Q, S) = \sum_{0 \leq k \leq m-1} u(j_{k}, k, Q).
    &&
    \end{flalign}
    Where $m$ is the length of $Q$. When $S$ and $Q$ are identical, we denote $u(S)$ = $u$($S$, $S$) = $u$($Q$, $S$). The utility of a sequence $T$ in $\mathcal{D}$, denoted as $u(T)$, is defined as:
\begin{flalign}
&&
u(T) = \sum_{S_i \in \mathcal{D}} \sum_{\substack{Q \subseteq S_i \ T \sim Q}} u(Q, S_i),
&&
\end{flalign}
where $Q$ represents any $q$-subsequence occurrence of $T$ in sequence $S_i$, and $u(Q, S_i)$ denotes the utility of $Q$ in $S_i$.  If $T$ appears multiple times in $S_i$, each occurrence contributes to the utility calculation.
\end{definition}

Let us consider the $q$-database $\mathcal{D}$ = \{$S_{1}$, $S_{2}$, $\ldots$, $S_{n}$\} shown in Table \ref{table2}, sequence $T$ = $\langle$$a$, $b$$\rangle$ appears in $S_{1}$, $S_{3}$ and $S_{4}$. Thus, the utility of $T$ in $\mathcal{D}$ is calculated below: $u(T)$ = $ \sum_{S_{i} \subseteq D}$ $u$($T$, $S_{i}$) = $u$($T$, $S_{1}$) + $u$($T$, $S_{3}$) + $u$($T$, $S_{4}$) = $u$($\langle$($a$: 1), ($b$: 2)$\rangle$, $S_{1}$) + $u$($\langle$($a$: 1), ($b$: 3)$\rangle$, $S_{1}$) + $u$($\langle$($a$: 2), ($b$: 3)$\rangle$, $S_{1}$) + $u$($\langle$($a$: 1), ($b$: 2)$\rangle$, $S_{2}$) + $u$($\langle$($a$: 1), ($b$: 2)$\rangle$, $S_{3}$) + $u$($\langle$($a$: 2), ($b$: 2)$\rangle$, $S_{4}$) + $u$($\langle$($a$: 2), ($b$: 2)$\rangle$, $S_{4}$) + $u$($\langle$($a$: 2), ($b$: 2)$\rangle$, $S_{4}$) = 66. 

\begin{definition}[Utility of Database \cite{yin2012uspan}]
    \rm The utility of the $q$-database $\mathcal{D}$ = \{$S_{1}$, $S_{2}$, $\ldots$, $S_{n}$\} is defined as:
    \begin{flalign}
    &&
   u(\mathcal{D}) = \sum_{1 \leq i \leq n} u( S_{i}).
    &&
    \end{flalign}
\end{definition}

\begin{definition}[Low-utility Sequential Pattern]
    \label{def:minimum utility}
    \rm A sequence $T$ is called a low-utility sequential pattern (LUSP) in a $q$-database 
    $\mathcal{D} = \{S_{1}, S_{2}, \ldots, S_{n}\}$ if it satisfies $u(T)$ $\leq$ \textit{minUtil}, where $\textit{minUtil} = \sigma \times u(\mathcal{D})$ and $\sigma$ is the minimum utility threshold.
\end{definition}

\subsection{Problem Formulation}

Given a $q$-sequence database $\mathcal{D}$, a minimum utility threshold $minUtil$ (Definition~\ref{def:minimum utility}), and a maximum length constraint $maxLen$ (Definition~\ref{def:length}), the task of LUSPM is to discover the complete set of LUSPs that satisfy these conditions.

\section{Investigation of Mining LUSPs with HUSPM Methods} \label{Investigation of Mining LUSPs Using HUSPM Methods}

In this section, we investigate the feasibility of mining low-utility sequential patterns using existing HUSPM algorithms. Existing HUSPM algorithms, such as USpan \cite{yin2012uspan} and HUSP-SP \cite{zhang2023husp}, are designed to discover sequences that satisfy a minimum utility threshold and are therefore not directly applicable to low-utility mining. Specifically, HUSPM focuses on sequences satisfying $u$($S$) $\geq$ $minUtil$, whereas low-utility mining targets sequences with 0 $<$ $u$($S$) $\leq$ $minUtil$, resulting in fundamentally different search directions. Moreover, HUSPM relies on upper-bound-based pruning strategies: if the maximum possible utility of a sequence is smaller than the threshold, the sequence and its extensions can be safely pruned. However, in low-utility mining, small utilities are precisely the target, rendering such pruning strategies ineffective and limiting the ability to reduce the search space. To address this issue, one may attempt to solve it through the following problem transformation strategies.

\textbf{Utility inversion method}. A straightforward idea is to transform low-utility mining into high-utility mining via utility inversion. Specifically, for each item $i$, its utility is converted as $u'$($i$) = $u_{max}$ - $u(i)$, where $u_{max}$ is the maximum item utility in the database. After this transformation, low-utility items become relatively high-utility, enabling the use of existing HUSPM algorithms. However, this idea has limitations. Utility inversion may be affected by sequence length, altering the utility distribution within sequences and breaking the original relative utility relationships. As a result, low-utility sequences in the original database may not correspond to high-utility patterns after transformation, and thus correctness cannot be guaranteed.

\textbf{Complement-based method}. Another intuitive idea is to first mine all high-utility sequential patterns and then derive low-utility patterns by taking the complement over all possible sequences. However, this approach is impractical for two reasons. First, high-utility and low-utility patterns have fundamentally different utility definitions, so the complement of high-utility patterns does not necessarily correspond to the true set of low-utility patterns, and thus cannot guarantee completeness or correctness. Second, when the minimum utility threshold is small, the number of high-utility patterns may grow explosively, making complement computation over a massive pattern set computationally prohibitive.

In summary, the above ideas are generally ineffective for solving the LUSPM problem. On the one hand, the pruning mechanisms of existing HUSPM algorithms are not compatible with the search objective of low-utility mining. Therefore, directly applying HUSPM algorithms to mine LUSPs is generally infeasible. On the other hand, problem transformations may alter the original utility relationships among sequences, thereby affecting the correctness of the mining results. Consequently, it is necessary to design specialized algorithms for mining low-utility sequential patterns.

\section{Algorithm Design} \label{sec: algorithm}
In this section, we present three algorithms, namely LUSPM$_b$, LUSPM$_s$, and LUSPM$_e$, to address the LUSPM problem. We first introduce the shared data structures of these algorithms, followed by a description of LUSPM$_b$. Then, we provide detailed explanations of the search tree, pruning strategies, and procedures adopted by LUSPM$_s$ and LUSPM$_e$. Finally, we analyze the time and space complexities of LUSPM$_s$ and LUSPM$_e$.

\subsection{Data Structure} \label{data structure}

We first describe the key data structures employed in the three algorithms: a bit matrix \cite{ayres2002sequential} for efficient sequence presence verification and a sequence chain for utility recording.

\subsubsection{Bit Matrix} \label{bit matrix}

To efficiently verify sequence presence, we employ a bitmap data structure \cite{ayres2002sequential}, which encodes each item as a binary vector indicating its presence (1) or absence (0) in the sequence. For example, for $S$ = $\langle$$a$, $b$, $c$, $a$, $b$, $d$, $a$$\rangle$, the bit matrix of item $a$ is (1, 0, 0, 1, 0, 0, 1). This representation enables rapid presence checks using simple bitwise operations, thereby reducing computational cost.

\subsubsection{Sequence-Utility Chain} \label{sequence utilities chain}

To enhance utility computation, we propose a sequence-utility (SU) chain structure for storing sequence utility information. It consists of a set of nodes, where each node represents the utility of a sequence in a specific occurrence within the database. For a sequence $S$ = $\langle$$i_{1}$, $i_{2}$, $\ldots$, $i_{n}$$\rangle$, its sequence-utility chain is defined as $M$ = $\langle$$\langle$$a_{11}$, $a_{12}$, $\ldots$, $a_{1n}$$\rangle$, $\langle$$a_{21}$, $a_{22}$, $\ldots$, $a_{2n}$$\rangle$, $\ldots$, $\langle$$a_{m1}$, $a_{m2}$, $\ldots$, $a_{mn}$$\rangle$$\rangle$, where $m$ is the number of occurrences of $S$ in the database, and $a_{pq}$ denotes the utility of the $q$-th item of $S$ in its $p$-th occurrence. For example, Fig. \ref{fig:Sequence-utility Chain} illustrates the sequence-utility chain corresponding to the sequences $\langle$$a$, $b$, $c$$\rangle$, $\langle$$a$, $b$$\rangle$, $\langle$$a$$\rangle$, and $\langle$$g$, $a$$\rangle$ from Table \ref{table2}. Since the sequence $Q$ = $\langle$$a$, $b$, $c$$\rangle$ appears twice, once in $q$-sequence $S_{1}$ and once in  $S_{2}$, the corresponding sequence chain $N$ for $Q$ is $\langle$$\langle$1, 2, 1$\rangle$, $\langle$1, 2, 2$\rangle$$\rangle$. Here, we use internal utility for simplicity, while the actual utility is obtained by multiplying internal utility by external utility. This compact design not only reduces memory consumption but also streamlines utility computation, thereby improving both efficiency and scalability. 

\begin{figure}[htbp]
    \centering
    \includegraphics[width=\linewidth]{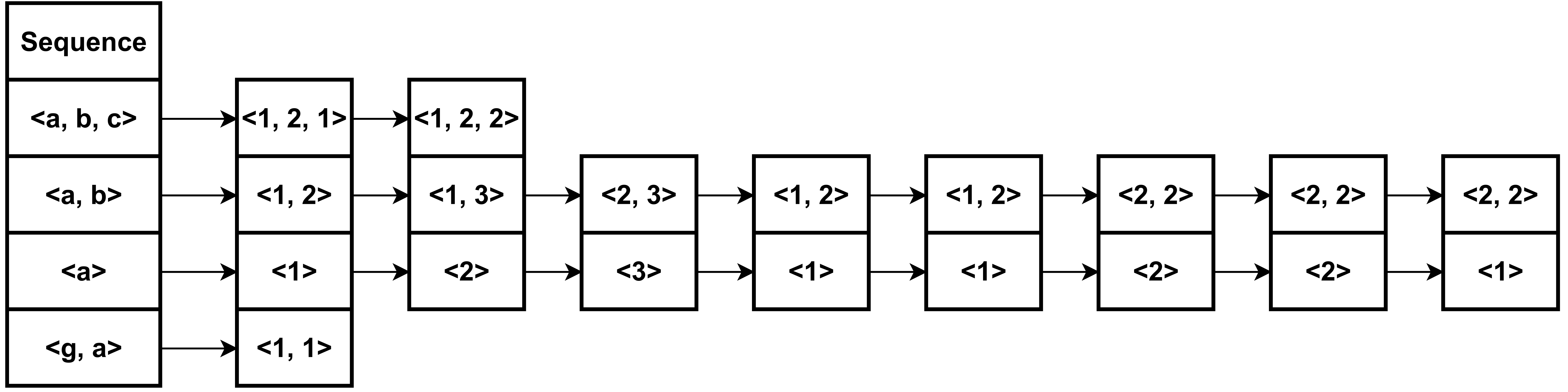}
    \caption{Sequence-utility chain of sequences$\langle$$a$, $b$, $c$$\rangle$, $\langle$$a$, $b$$\rangle$, $\langle$$a$$\rangle$, and $\langle$$g$, $a$$\rangle$.}
    \label{fig:Sequence-utility Chain}
\end{figure}
\vspace{-10px}
\subsection{LUSPM$_b$} \label{basic methods}

To discover the complete set of low-utility sequential patterns (LUSPs), we propose LUSPM$_b$ to discover LUSPs using exhaustive enumeration, and its procedure is presented in Algorithm~\ref{LUSPM_naive}. Specifically, LUSPM$_b$ begins by enumerating all possible candidate sequences from the database (line 1). For each candidate sequence $S$, the \textit{computeUtility} method is invoked to calculate its utility, after which the algorithm checks whether the utility exceeds \textit{minUtil} and whether the length of $S$ is no greater than \textit{maxLen} to determine if $S$ can be identified as a LUSP (lines 2–6). Although this method guarantees completeness, it relies on exhaustive search, which incurs extremely high computational costs. Moreover, when applied to large-scale data, the number of candidate sequences grows exponentially, rendering this enumeration approach infeasible in practice. Therefore, it is essential to design more effective strategies to improve the efficiency of LUSPM.

\begin{algorithm}[!h]
    \caption{LUSPM$_b$}
    \label{LUSPM_naive}
    \small
    \begin{algorithmic}[1]
    \Statex \textbf{Input:} $\mathcal{D}$: a sequence database; \textit{minUtil}: the utility threshold; \textit{maxLen}: the length restriction.
    \Statex \textbf{Output:} \textit{LUSPs}: the complete set of LUSPs.

    \State scan $\mathcal{D}$ to generate \textit{allSequenceSet};
    \For{each $S$ $\in$ \textit{allSequenceSet}}
    \If{\textit{computeUtility(S)} $\leq$ \textit{minUtil} \textbf{and}  $|S|$ $\leq$ \textit{maxLen}}
            \State add $S$ to \textit{LUSPs};
        \EndIf
    \EndFor
    \State \Return \textit{LUSPs}
    \end{algorithmic}
\end{algorithm}

\subsection{Pruning Strategies and Search Trees}

In LUSPM$_b$, direct utility calculation on a substantial amount of generated candidates incurs significant computational overhead. To address this problem, we propose two improved algorithms, LUSPM$_s$ and LUSPM$_e$, to greatly reduce computational overhead. In this section, we introduce the pruning strategies and search trees employed by the algorithms.

\subsubsection{Definitions and Pruning Strategies}

To efficiently mine low-utility sequential patterns, we introduce several pruning strategies used in LUSPM$_s$ and LUSPM$_e$, together with the corresponding definitions and proofs of theorems.

\begin{definition}[Sequence Shrinkage and Removed-index]
   \rm Sequence shrinkage generates subsequences by removing items from a super-sequence. For a sequence $S$ obtained by removing an item $i$ at position $j$, further shrinkage is applied only to items appearing after position $j$. The value $j$ - 1 is referred to as the removed-index of $S$. 
\end{definition}

For example, consider the sequence $\langle$$a$, $b$, $c$, $d$, $e$$\rangle$. By removing the item $b$ at position 2, the sequence $\langle$$a$, $c$, $d$, $e$$\rangle$ is obtained. By further removing items after position 2, the sequences $\langle$$a$, $d$, $e$$\rangle$, $\langle$$a$, $c$, $e$$\rangle$, and $\langle$$a$, $c$, $d$$\rangle$ can be obtained.

\begin{definition}[Sequence Extension \cite{ayres2002sequential}]\label{Sequence Extension}
  \rm Sequence extension generates super-sequences by inserting items after the last element of a subsequence. For a sequence $S$ ending with item $i$ and contained in a super-sequence $F$, any item appearing after $i$ in $F$ can be appended to generate a longer sequence. Starting from the empty set and extending iteratively produces all subsequences of $F$.
\end{definition}

For example, starting from $\langle$$a$, $b$, $c$, $d$, $e$$\rangle$, we begin with $\emptyset$, insert $a$ and $c$ to generate $\langle$$a$, $c$$\rangle$, and then extend it with $d$ to generate $\langle$$a$, $c$, $d$$\rangle$.

\begin{theorem}\label{Theorem 5}
    \rm For a sequence $F$ and an item $i$ at position $j$ in $F$, $u$($i$, $j$ - 1, $F$) $\leq$ $u$($F$).
\end{theorem}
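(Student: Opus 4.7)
The plan is to unfold both sides of the inequality using Definition~6 and Definition~7, observing that it is essentially a statement about bounding one summand by a sum of non-negative terms. First I would write $F = \langle (i_1:q_1), (i_2:q_2), \ldots, (i_n:q_n)\rangle$ and, using the identification $u(F) = u(F,F)$ from Definition~7, expand
\[
u(F) \;=\; \sum_{0 \leq k \leq n-1} u(i_{k+1}, k, F) \;=\; \sum_{0 \leq k \leq n-1} q(i_{k+1}, k, F)\cdot ex(i_{k+1}),
\]
so that $u(F)$ is a sum over all positions of per-item utilities.

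Next I would isolate the term corresponding to the position $j-1$ where item $a$ sits, rewriting the sum as
\[
u(F) \;=\; u(a, j-1, F) \;+\; \sum_{\substack{0 \leq k \leq n-1 \\ k \neq j-1}} u(i_{k+1}, k, F).
\]
Since internal utilities $q(\cdot)$ represent non-negative quantities and external utilities $ex(\cdot)$ are non-negative unit values, every summand $u(i_{k+1}, k, F) = q(i_{k+1}, k, F)\cdot ex(i_{k+1})$ is non-negative. Dropping the residual non-negative sum therefore yields $u(F) \geq u(a, j-1, F)$, which is the claim.

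The argument has essentially no obstacle: it is a one-line monotonicity consequence of the additive definition of sequence utility together with non-negativity of $q$ and $ex$. The only thing worth flagging explicitly in the write-up is that the non-negativity assumption on internal and external utilities is a standing one throughout the utility-based SPM literature and is implicit in the database and external-utility tables (Tables~II and~III); once this is stated, the theorem follows immediately. I would keep the proof to two or three lines and use it in later arguments as a justification for item-level bounds on full-sequence utility.
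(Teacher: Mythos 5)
Your proof is correct and takes essentially the same route as the paper's: both decompose $u(F)$ additively into per-position item utilities, identify $u(a,j-1,F)$ as one summand, and drop the remaining non-negative terms. The only cosmetic difference is that the paper carries out this decomposition on the sequence-utility chain (so each positional term is already aggregated over all occurrences of $F$ in the database), whereas you work per occurrence and would sum over occurrences at the end; your explicit flagging of the non-negativity of $q(\cdot)$ and $ex(\cdot)$ is a small gain in rigor over the paper's implicit use of it.
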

\begin{proof}
    \rm Let $F$ = $\langle$$i_1$, $i_2$, $\ldots$, $i_n$$\rangle$ with sequence-utility chain $M$ = $\langle$$\langle$$a_{11}$, $a_{12}$, $\ldots$, $a_{1n}$$\rangle$, $\langle$$a_{21}$, $a_{22}$, $\ldots$, $a_{2n}$$\rangle$, $\ldots$, $\langle$$a_{k1}$, $a_{k2}$, $\ldots$, $a_{kn}$$\rangle$$\rangle$. Then $u$($a$, $j$ - 1, $F$) = $\sum_{p=1}^k a_{pj}$, while $u$($F$) = $\sum_{1 \leq q \leq n}$ $u$($i_{q}$, $q$ - 1, $F$). Hence, $u$($a$, $j$ - 1, $F$) $\leq$ $u$($F$).
\end{proof}

In Table \ref{table2}, the sequence $S$ = $\langle$$a$, $b$, $c$$\rangle$ has sequence-utility chain $\langle$$\langle$1, 2, 1$\rangle$, $\langle$1, 2, 2$\rangle$$\rangle$. Then $u$($b$, 1, $S$) = 2 + 2 = 4, and $u(S)$ = $u$($a$, 0, $S$) + $u$($b$, 1, $S$) + $u$($c$, 2, $S$) = 9 $>$ 4.

\begin{strategy}\label{strategy 1}
    \rm {Early Utility Pruning Strategy (EUPS):} For a sequence $F$ with item $i$ at position $j$, if $u$($i$, $j$ - 1, $F$) $>$ minUtil, then by Theorem~\ref{Theorem 5}, any low-utility sequence derived from $F$ cannot contain $i$. Thus, $i$ can be pruned. Removing $i$ yields a new sequence $Q$ to replace $F$. 
\end{strategy}
\begin{proof}
    \rm Let $F$ be a sequence and $i$ be an item at position $j$ in $F$. If $u$($i$, $j$ - 1, $F$) $>$ \textit{minUtil}, then for any sequence $Q$ derived from $F$ that contains this occurrence of $i$, whether by shrinking or extending $F$, we have $u(Q)$ $\geq$ $u$($i$, $j$ - 1, $F$) $>$ \textit{minUtil}. Consequently, $Q$ cannot be a LUSP, since its utility exceeds the threshold. Thus, it is valid to preemptively prune $i$ from $F$, resulting in a new sequence $Q'$ that replaces $F$.
\end{proof}

For example, with \textit{minUtil} = 3, the sequence $F$ = $\langle$$a$, $b$, $c$$\rangle$ has the sequence-utility chain $\langle$$\langle$1, 2, 1$\rangle$, $\langle$1, 2, 2$\rangle$$\rangle$ in Table~\ref{table2}. Since $u$($b$, 1, $F$) = 2 + 2 = 4 $>$ \textit{minUtil}, the item $b$ is pruned to obtain $Q$ = $\langle$$a$, $c$$\rangle$, which then replaces $F$.

\begin{definition}[Lower Bound within Super-sequence]
    \rm Let $F$ = $\langle$$i_1$, $i_2$, $\ldots$, $i_n$$\rangle$ be a sequence, $S$ $\subseteq$ $F$ be a subsequence generated by removing some items from $F$, and let $N$ denote the sequence-utility chain of $F$. By removing the utilities of the removed items in $N$, we obtain the sequence-utility chain $M$ = $\langle$$\langle$$a_{11}$, $a_{12}$, $\ldots$, $a_{1m}$$\rangle$, $\langle$$a_{21}$, $a_{22}$, $\ldots$, $a_{2m}$$\rangle$, $\ldots$, $\langle$$a_{k1}$, $a_{k2}$, $\ldots$, $a_{km}$$\rangle$$\rangle$ of $S$ within $F$. Based on $M$, we define the lower bound of $S$ within its super-sequence $F$ as 
\begin{flalign}
    &&
   LBS(S,F) = \sum_{(1 \leq i \leq sup(F)) \land (1 \leq j \leq m) } a_{ij},
    &&
\end{flalign}
where \textit{sup}$(F)$ is the number of occurrences of $F$ in the database, $m$ is the length of $S$, and $a_{ij}$ $\in$ $M$.
\end{definition}

For example, for $F$ = $\langle$$a$, $b$, $c$$\rangle$ with sequence-utility chain $N$ = $\langle$$\langle$1, 2, 1$\rangle$, $\langle$1, 2, 2$\rangle$$\rangle$, by removing items $a$ and $c$, we can generate $S$ = $\langle$$b$$\rangle$ with sequence-utility chain $M$ = $\langle $$\langle$2$\rangle$, $\langle$2$\rangle$$\rangle$. Then, \textit{LBS}($S$, $F$) = 2 + 2 = 4.

\begin{theorem}\rm\label{Theorem 1}
    For any sequence $F$ and its subsequence $S$, \textit{sup}($F$) $\leq$ \textit{sup}($S$).
\end{theorem}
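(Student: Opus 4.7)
The plan is to exploit the subsequence relation $S \subseteq F$ to canonically map each occurrence of $F$ in the database to an occurrence of $S$, thereby forcing $\textit{sup}(S) \geq \textit{sup}(F)$.

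First, I would write $F = \langle i_1, i_2, \ldots, i_n \rangle$ and, invoking $S \subseteq F$, fix strictly increasing indices $j_1 < j_2 < \cdots < j_m$ such that $S = \langle i_{j_1}, i_{j_2}, \ldots, i_{j_m} \rangle$. For any occurrence of $F$ in a database sequence $S_k \in \mathcal{D}$ at positions $p_1 < p_2 < \cdots < p_n$, the tuple $(p_{j_1}, p_{j_2}, \ldots, p_{j_m})$ is itself strictly increasing in $S_k$ and its items match $S$ in order, so every $F$-occurrence canonically yields an $S$-occurrence in the same database sequence.

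Second, I would recast this at the level of the sequence-utility chain introduced in Section~\ref{sequence utilities chain}. Each row of $F$'s SU-chain corresponds to a distinct $F$-occurrence, and deleting the columns whose items lie outside $\{i_{j_1}, \ldots, i_{j_m}\}$ produces a row that is present in $S$'s SU-chain. Since the number of rows in a pattern's SU-chain equals its support by definition, summing across all database sequences immediately yields $\textit{sup}(F) \leq \textit{sup}(S)$.

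The step to handle carefully is the accounting of the induced $S$-occurrences: one must check that every projected occurrence is a legitimate, countable row of $S$'s SU-chain rather than a spurious artifact created by the projection. This reduces to verifying that the selected positions $(p_{j_1}, \ldots, p_{j_m})$ remain a valid match of $S$'s items in $S_k$, which is immediate from the construction, so the anti-monotonicity bound goes through and the theorem follows.
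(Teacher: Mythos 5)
Your plan follows the paper's proof exactly: the paper's entire argument is ``since $S \subseteq F$, every occurrence of $F$ contains $S$,'' and you make this precise by projecting each embedding $(p_1,\ldots,p_n)$ of $F$ onto the retained coordinates $(p_{j_1},\ldots,p_{j_m})$ to obtain an embedding of $S$. That projection is indeed well defined, so the ``step to handle carefully'' that you flag (legitimacy of the projected match) is not where the difficulty lies. The step that actually needs care is a different one: to conclude $\textit{sup}(F) \leq \textit{sup}(S)$ from the existence of this map, you need it to be \emph{injective}, and in general it is not --- two embeddings of $F$ that agree on the retained positions but differ on a deleted position project to the same embedding of $S$. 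The paper's own example for this theorem counts occurrences with multiplicity (it reports $\textit{sup}(\langle a, b\rangle) = 8$, including three embeddings in $S_1$ and three in $S_4$), and under that definition the inequality can fail outright: in $S_4 = \langle f, e, a, b, a, b\rangle$ of Table~\ref{table2}, the pattern $\langle a, b\rangle$ has three embeddings, $(3,4)$, $(3,6)$, $(5,6)$, while its subsequence $\langle a\rangle$ has only two, namely $(3)$ and $(5)$; on the one-sequence database $\{S_4\}$ this gives $\textit{sup}(\langle a, b\rangle) = 3 > 2 = \textit{sup}(\langle a\rangle)$.

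Your argument (and the theorem) does go through under the classical definition of support as the number of \emph{database sequences} containing at least one embedding of the pattern: then your projection shows that every database sequence supporting $F$ also supports $S$, the set of supporting sequences of $F$ is contained in that of $S$, and injectivity is not needed. So you should either pin down that definition explicitly before invoking the projection, or, if support is meant to count embeddings with multiplicity (as the SU-chain rows do), acknowledge that the bound requires a different justification and, as the $S_4$ example shows, is not true in general. The same caveat applies to your SU-chain reformulation: deleting columns from $F$'s chain can produce duplicate rows that all correspond to a single row of $S$'s chain, so row counts need not be preserved in the direction you want.
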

\begin{proof} 
    \rm Since $S$ $\subseteq$ $F$, every occurrence of $F$ contains $S$. Hence, \textit{sup}($F$) $\leq$ \textit{sup}($S$). 
\end{proof}

For example, in Table \ref{table2}, the sequence $S$ = $\langle$$a$, $b$, $c$$\rangle$ has \textit{sup}$(S)$ = 2 and $Q$ = $\langle$$a$, $b$$\rangle$ has \textit{sup}$(Q)$ = 8. It always holds that \textit{sup}($Q$) $\geq$ \textit{sup}($S$) = 2.

\begin{theorem}\rm\label{Theorem 2}
    For any sequence $F$ and its subsequence $S$, it holds that \textit{LBS}($S$, $F$) $\leq$ $u$($S$).
\end{theorem}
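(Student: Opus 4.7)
The plan is to show that $LBS(S,F)$ counts exactly the utility contributions of a specific subset of occurrences of $S$, while $u(S)$ counts the utility contributions of \emph{all} occurrences of $S$ in $\mathcal{D}$. Since all internal utilities (and hence the entries $a_{ij}$ of the sequence-utility chain) are non-negative, bounding the partial sum by the total sum will yield the inequality.

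First I would unpack the definitions. By construction, the sequence-utility chain $M$ of $S$ within $F$ is obtained from $N$ (the chain of $F$) by deleting the columns associated with the items removed from $F$. Thus for the $i$-th occurrence of $F$ in $\mathcal{D}$, the row $\langle a_{i1},\ldots,a_{im}\rangle$ of $M$ is exactly the tuple of internal utilities of the items of $S$ at the positions they occupy inside that occurrence of $F$. So this row specifies a genuine occurrence of $S$ in $\mathcal{D}$ (namely the one induced by that occurrence of $F$), and $\sum_{j=1}^{m} a_{ij}$ is precisely $u(Q_i, S_{\pi(i)})$ where $Q_i$ is that induced occurrence and $S_{\pi(i)}$ is the $q$-sequence of $\mathcal{D}$ containing it. Summing over $i$ from $1$ to $\mathit{sup}(F)$ gives exactly $LBS(S,F)$.

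Next I would invoke Theorem~\ref{Theorem 1}: since $S \subseteq F$, every occurrence of $F$ yields an occurrence of $S$, and the mapping from occurrences of $F$ to the induced occurrences of $S$ is injective (distinct occurrences of $F$ in the same $q$-sequence use distinct position-sets, so their induced $S$-occurrences occupy distinct positions as well). Hence the $\mathit{sup}(F)$ occurrences counted in $LBS(S,F)$ form a sub-multiset of the full set of occurrences of $S$ summed in $u(S) = \sum_{S_\ell \in \mathcal{D}} \sum_{Q \subseteq S_\ell,\, S \sim Q} u(Q, S_\ell)$. Because every utility term $u(Q, S_\ell)$ is a sum of non-negative internal utilities multiplied by positive external utilities, dropping the additional occurrences of $S$ (those not induced by an occurrence of $F$) can only decrease the total. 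Therefore $LBS(S,F) \le u(S)$.

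The main obstacle here is the bookkeeping in step two: one must verify that the injection from occurrences of $F$ to occurrences of $S$ is well defined and that the per-row sum in $M$ really matches the utility of the induced occurrence of $S$ in the original $q$-sequence, rather than, say, the utility of some different occurrence with the same item multiset. Once this correspondence is pinned down, the remainder of the argument is just non-negativity of utilities.
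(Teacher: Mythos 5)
Your overall strategy is the same as the paper's: identify \textit{LBS}($S$, $F$) with the utility mass of those occurrences of $S$ that are induced by occurrences of $F$, and then observe that $u(S)$ additionally sums over the remaining occurrences of $S$, all of whose contributions are non-negative. The paper's own proof simply asserts the decomposition $u(S) = \textit{LBS}(S,F) + \sum_{p=1}^{b}\sum_{q=1}^{m} a_{pq}$ without further comment; you correctly isolate the point on which that decomposition actually rests, namely that the map from occurrences of $F$ to induced occurrences of $S$ is injective and utility-preserving.

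However, the parenthetical argument you give for injectivity is wrong, and this is a genuine gap. Distinct occurrences of $F$ in the same $q$-sequence do occupy distinct position-sets, but they may differ \emph{only} at the positions of the removed items, in which case they project to the \emph{same} occurrence of $S$. Concretely, take the single $q$-sequence $\langle(a{:}1),(b{:}1),(b{:}1),(c{:}1)\rangle$ with unit external utilities, $F=\langle a,b,c\rangle$, and $S=\langle a,c\rangle$ obtained by deleting $b$. Then $F$ has two occurrences (positions $\{1,2,4\}$ and $\{1,3,4\}$), both of which induce the single occurrence of $S$ at positions $\{1,4\}$; the chain of $S$ within $F$ is $\langle\langle 1,1\rangle,\langle 1,1\rangle\rangle$, so $\textit{LBS}(S,F)=4$ while $u(S)=2$, and the claimed inequality fails. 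So the ``bookkeeping'' you defer to the end is not mere bookkeeping: under the paper's occurrence-based definitions of \textit{sup} and $u(\cdot)$, the statement needs an additional hypothesis (e.g., that no two occurrences of $F$ agree on all retained positions), and the paper's own proof silently assumes exactly the same thing when it writes $u(S)$ as \textit{LBS}$(S,F)$ plus the contributions of $b$ further occurrences. Your argument goes through verbatim once that injectivity is guaranteed, but as written the step you flagged as ``the main obstacle'' is precisely where the proof breaks.
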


\begin{proof}
    \rm  By Theorem \ref{Theorem 1}, we have \textit{sup}($F$) $\leq$ \textit{sup}($S$). If \textit{sup}($F$) = \textit{sup}($S$), then $S$ and $F$ co-occur in all cases, so \textit{LBS}($S$, $F$) = $u$($S$). If \textit{sup}($F$) $<$ \textit{sup}($S$), there exist $b$ occurrences where $S$ appears without $F$, implying $u(S)$ = \textit{LBS}($S$, $F$) + $\sum_{p=1}^{b}$ $\sum_{q=1}^{m}$$a_{pq}$, where $m$ is the number of items in $S$. Hence, \textit{LBS}($S$, $F$) $\leq$ $u$($S$).
\end{proof}

For example, in Table \ref{table2}, the sequence $S$ = $\langle$$a$, $b$, $c$$\rangle$ appears in $S_1$ and $S_2$. Its subsequence $Q$ = $\langle$$a$, $b$$\rangle$ appears eight times: three times in $S_1$, once in $S_2$, once in $S_3$, and three times in $S_4$. The sequence-utility chain of $Q$ within $S$ is $\langle $$\langle$1, 2$\rangle$, $\langle$1, 2$\rangle$$\rangle$. Thus, \textit{LBS}($Q$, $S$) = 1 + 2 + 1 + 2 = 6, while $u$($Q$) = 30 $>$ \textit{LBS}($Q$, $S$).

\begin{strategy}[Shrinkage-Based Low-Utility Sequence Pruning Strategy (SLUSPS]
\label{strategy 2}
    \rm When shrinking sequences, if $u(F)$ $>$ \textit{minUtil} and $S$ is generated from $F$ through shrinkage, we first compute \textit{LBS}($S$, $F$). If \textit{LBS}($S$, $F$) $>$ \textit{minUtil}, this implies that $S$ is not a LUSP and should be pruned. Otherwise, we check $u(S)$ to determine whether $S$ is LUSP, then further shrink $S$ to generate new subsequences. The same pruning procedure is recursively applied to each of them.
\end{strategy}
\begin{proof}
    \rm  Suppose $S$ is a subsequence generated by shrinking a super-sequence $F$. If \textit{LBS}($S$, $F$) $>$ \textit{minUtil}, then by Theorem~\ref{Theorem 2}, we have: $u$($S$) $\geq$ \textit{LBS}($S$, $F$) $>$ \textit{minUtil}. Hence, $S$ cannot be a LUSP, since its utility exceeds the threshold. Therefore, pruning $S$ at this stage is valid. If \textit{LBS}($S$, $F$) $\leq$ \textit{minUtil}, then \textit{LBS} alone cannot determine whether $S$ is a LUSP. In this case, we compute $u(S)$. If $u(S)$ $>$ \textit{minUtil}, $S$ is not a LUSP and can be pruned. Otherwise, $S$ is retained as a candidate LUSP, and the shrinking process continues recursively to generate further subsequences, to which the same pruning logic is applied.
\end{proof}

For example, let \textit{minUtil} = 3. In Table \ref{table2}, consider the sequence $F$ = $\langle$$d$, $a$, $c$$\rangle$ with $u$($F$) = 4 $>$ 3, indicating that $F$ is not a low-utility sequential pattern (LUSP). We then generate its subsequences $P$ = $\langle$$d$, $a$$\rangle$, $Q$ = $\langle$$d$, $c$$\rangle$, and $O$ = $\langle$$a$, $c$$\rangle$. For $P$, \textit{LBS}($P$, $F$) = 3, which cannot directly determine whether $P$ is a LUSP. Therefore, we compute its actual utility $u$($P$) = 3 =  \textit{minUtil}, indicating that $P$ is a LUSP. For $Q$, \textit{LBS}($Q$, $F$) = 5 $>$ \textit{minUtil}. According to Theorem~\ref{Theorem 2}, $Q$ cannot be a LUSP. Therefore, $Q$ is pruned without computing $u$($Q$). Shrinking $Q$ generates $B$ = $\langle$$d$$\rangle$ and $C$ = $\langle$$c$$\rangle$, where $u$($B$) = 11 $>$ \textit{minUtil} and $u$($C$) = 7 $>$ \textit{minUtil}. Hence, neither $B$ nor $C$ is a LUSP, and they are pruned. The sequence $O$ is processed in the same manner.

\begin{definition}[Determined Subsequence and Extension of Determined Subsequence]
    \label{Determined subsequence and extension of determined subsequence}
    \rm Let $F$ = $\langle$$i_1$, $i_2$, $\ldots$, $i_n$$\rangle$ and suppose that $S$ is generated by removing item $i_p$ from $F$. Then the prefix $P$ = $\langle$$i_{1}$, $i_{2}$, $\ldots$, $i_{p-1}$$\rangle$ is called the determined subsequence of $S$. Furthermore, any sequence generated by inserting an item from $S$ into $P$ after position $p$ - 1 is called an extension sequence of $P$, which is also a subsequence of $F$.
\end{definition}

For example, given $F$ = $\langle$$a$, $b$, $c$, $a$, $b$, $d$$\rangle$, removing $c$ yields the sequence $S$ = $\langle$$a$, $b$, $a$, $b$, $d$$\rangle$. The determined subsequence of $S$ is $P$ = $\langle$$a$, $b$$\rangle$. From $S$, extension sequences of $P$ such as $\langle$$a$, $b$, $a$$\rangle$, $\langle$$a$, $b$, $b$$\rangle$, and $\langle$$a$, $b$, $d$$\rangle$ can be derived, all of which are subsequences of $F$.

\begin{definition}[Lower Bound for Prune]\label{Lower Bound For Prune}
    \rm  Let $F$ = $\langle$$i_1$, $i_2$, $\ldots$, $i_n$$\rangle$ be a sequence, and let $S$ be a subsequence generated by removing item $i_p$ from $F$, with determined subsequence $P$. For any extension sequence $Q$ of $P$ generated by inserting $i_q$ ($q$ $\geq$ $p$) in $F$, the lower bound for prune of $S$ at position $q$ - 1 in $F$ is defined as \textit{LBP}($S$, $F$, $q$ - 1) = \textit{LBS}($Q$, $F$).
\end{definition}

For example, using the example from Definition \ref{Determined subsequence and extension of determined subsequence}, suppose that the sequence-utility chain of the sequence $F$ is $\langle$1, 2, 1, 2, 3, 3$\rangle$. Extending sequence $P$ by item $a$ at position 4 in $F$ yields \textit{LBP}($S$, $F$, 3) = 1 + 2 + 2 = 5.

\begin{theorem}\rm \label{Theorem 3}
    For any sequence $F$ and its subsequence $S$, we have \textit{LBS}($S$, $F$) $<$ $u(F)$.
\end{theorem}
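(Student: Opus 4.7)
The plan is to express both $u(F)$ and $\textit{LBS}(S,F)$ in terms of the entries of the \emph{same} sequence-utility chain of $F$, and then observe that passing from $u(F)$ to $\textit{LBS}(S,F)$ simply discards the columns of that chain corresponding to items removed when shrinking $F$ to $S$. The desired strict inequality then reduces to checking that this discarded mass is strictly positive.

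Concretely, I would fix $F=\langle i_1,\ldots,i_n\rangle$ and write its sequence-utility chain as $N=\{a_{ij}\}_{1\le i\le \textit{sup}(F),\,1\le j\le n}$. Applying Definition~\ref{Utility of sequence} with $T=F$ and noting that the $q$-subsequence occurrences of $F$ in $\mathcal{D}$ are, by construction, exactly the rows of $N$, I obtain the identity
\begin{equation*}
u(F)=\sum_{i=1}^{\textit{sup}(F)}\sum_{j=1}^{n} a_{ij}.
\end{equation*}
Since $S$ is generated by removing some items of $F$, there is a proper subset $K\subsetneq\{1,\ldots,n\}$ of indices corresponding to the items kept in $S$. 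By the definition of the lower bound within a super-sequence, $\textit{LBS}(S,F)=\sum_{i=1}^{\textit{sup}(F)}\sum_{j\in K} a_{ij}$, and therefore
\begin{equation*}
u(F)-\textit{LBS}(S,F)=\sum_{i=1}^{\textit{sup}(F)}\sum_{j\notin K} a_{ij}.
\end{equation*}

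The last step is to argue that this residual sum is strictly positive. Each $a_{ij}$ equals an internal quantity times an external utility, both of which are strictly positive in the given database model (all entries of Tables~\ref{table2}--\ref{table3} are positive, and in general $q$-items encode positive quantities with positive external utilities). Because shrinkage removes at least one item, the complement $\{1,\ldots,n\}\setminus K$ is nonempty, and $\textit{sup}(F)\ge 1$ since $F$ is assumed to occur (otherwise the statement is vacuous, as $N$ is empty and both sides equal $0$). Hence the residual is a nonempty sum of strictly positive terms, giving $\textit{LBS}(S,F)<u(F)$.

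The main obstacle is really only a matter of precision rather than depth: one has to make clear that the chain of $F$ is the common reference for both quantities, and that the shrinkage assumption guarantees at least one dropped column. Once the bookkeeping is set up correctly, positivity of $q$-item utilities closes the argument immediately; no upper-bound estimates or combinatorial case analyses are needed.
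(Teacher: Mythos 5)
Your proof is correct and follows essentially the same route as the paper's: both argue that the chain of $S$ within $F$ is a proper sub-collection of the chain of $F$, so the discarded, strictly positive entries force \textit{LBS}($S$, $F$) $<$ $u(F)$. Your version merely makes explicit the bookkeeping (the kept-index set $K$, positivity of $q$-item utilities, and the need for at least one removed column) that the paper's one-sentence proof leaves implicit.
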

\begin{proof}
    \rm Since the sequence-utility chain of $S$ with $F$ is contained within that of $F$, the sum of its elements must be strictly less than the total utility of $F$. Therefore, \textit{LBS}($S$, $F$) $<$ $u$($F$). 
\end{proof}

For example, in Table \ref{table2}, let $F$ = $\langle$$a$, $b$, $c$$\rangle$ with sequence-utility chain $\langle$$\langle$1, 2, 1$\rangle$, $\langle$1, 2, 2$\rangle$$\rangle$, where $u(F)$ = 9. For the subsequence $S$ = $\langle$$a$, $b$$\rangle$, its chain with $F$ is $ \langle$$\langle$1, 2$\rangle$, $\langle$1, 2$\rangle$$\rangle$, giving \textit{LBS}($S$, $F$) = 6 $<$ 9.

\begin{theorem}\label{Theorem 4}
    \rm \rm Let $F$ be a sequence and $S$ and $Q$ be subsequences of $F$ such that $Q$ is an extension of $S$. Then we have \textit{LBS}($S$, $F$) $<$ \textit{LBS}($Q$, $F$) $<$ $u(F)$.
\end{theorem}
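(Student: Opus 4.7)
The plan is to split the chain $\textit{LBS}(S, F) < \textit{LBS}(Q, F) < u(F)$ into two independent strict inequalities and dispatch each one separately. The left inequality I would attack by directly comparing the sequence-utility chains of $S$ and $Q$ within $F$, and the right inequality I would obtain by invoking Theorem~\ref{Theorem 3} on $Q$.

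First I would expand both $\textit{LBS}$ terms using the definition of the lower bound within a super-sequence. Since both $S$ and $Q$ are obtained by pruning items from $F$, their sequence-utility chains within $F$ are sub-arrays of the chain $N$ of $F$ sharing the same $\textit{sup}(F)$ rows, and they differ only in which columns of $N$ are retained. Because $Q$ is an extension of $S$ in the sense of Definition~\ref{Sequence Extension}, every column retained for $S$ is also retained for $Q$, and at least one additional column is retained. Writing
\begin{equation*}
\textit{LBS}(Q, F) - \textit{LBS}(S, F) = \Delta,
\end{equation*}
where $\Delta$ is the sum of all entries in the extra columns over every occurrence of $F$, the left inequality reduces to showing $\Delta > 0$.

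For the right inequality, I would invoke Theorem~\ref{Theorem 3} with $Q$ in the role of the subsequence, which gives $\textit{LBS}(Q, F) < u(F)$ immediately, provided $Q$ is a proper subsequence of $F$. This premise is inherited from the fact that $S$ is generated by removing at least one item from $F$, and $Q$ is obtained by extending the determined subsequence of $S$ with items drawn from $S$ itself, so the removed positions persist and $Q \neq F$.

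The hard part will be cleanly justifying $\Delta > 0$. This hinges on the convention that every internal utility $q_k$ in a quantitative sequence is strictly positive, so every entry of a sequence-utility chain is strictly positive; combined with $\textit{sup}(F) \geq 1$ and the presence of at least one extra column in $Q$'s chain, this forces $\Delta$ to be a sum of at least one strictly positive term. I would spell out this positivity explicitly, since without it one could at best obtain weak inequalities, and the statement of the theorem is intentionally strict on both sides.
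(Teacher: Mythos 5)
Your proposal is correct and follows essentially the same route as the paper's proof: both arguments rest on the containment of the sequence-utility chain of $S$ within that of $Q$, and of $Q$ within that of $F$, with the sums of the corresponding entries yielding the two inequalities. Your version is simply more explicit about the two points the paper leaves implicit --- the strict positivity of the chain entries (needed for strictness) and the fact that $Q$ remains a proper subsequence of $F$ so that Theorem~\ref{Theorem 3} applies.
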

\begin{proof}
    Since $S$ $\subset$ $Q$ $\subset$ $F$, the sequence-utility chain of $S$ within $F$ is contained in that of $Q$, which is in turn contained in that of $F$. Therefore, summing the corresponding elements of these chains yields the inequality.
\end{proof}

For example, in Table~\ref{table2}, let $F$ = $\langle$$a$, $b$, $c$$\rangle$ with sequence-utility chain $N$ = $\langle$$\langle$1, 2, 1$\rangle$, $\langle$1, 2, 2$\rangle$$\rangle$, $S$ = $\langle$$a$$\rangle$, and $Q$ = $\langle$$a$, $c$$\rangle$. We obtain \textit{LBS}($S$, $F$) = 2, \textit{LBS}($Q$, $F$) = 5, $u$($F$) = 9, which satisfies \textit{LBS}($S$, $F$) $<$ \textit{LBS}($Q$, $F$) $<$ $u$($F$).

\begin{strategy}[Shrinkage-Based Invalid Item Pruning (SBIPS)]
    \label{strategy 3}
    \rm For a sequence $S$ derived from $F$ with a determined subsequence $P$, if for an extension $Q$ = $P \oplus i_k$ we have \textit{LBP}($S$, $k$ - 1) = \textit{LBS}($Q$, $F$) $>$ \textit{minUtil}, then all sequences generated by further shrinking $S$ that contain item ${i}_{k}$ can be pruned. 
\end{strategy}
\begin{proof}
    \rm Let $S$ be a sequence generated by shrinking a super-sequence $F$, with an item $i_k$ such that \textit{LBP}($S$, $k$ - 1) $>$ \textit{minUtil}. By Definition~\ref{Lower Bound For Prune}, we have \textit{LBS}($Q$, $F$) = \textit{LBP}($S$, $k$ - 1) for the extension $Q$ = $P$ $\oplus$ $i_k$. Since $u(Q)$ $\geq$ \textit{LBS}($Q$, $F$) $>$ \textit{minUtil} by Theorem~\ref{Theorem 2}, $Q$ cannot be a LUSP. Furthermore, for any sequence $Q'$ generated by further shrinking $S$ that still contains item $j_k$, its utility satisfies $u(Q')$ $\geq$ $u$($Q$), because $Q'$ is a subsequence of $Q$ generated by removing other items while retaining $i_k$. Therefore, $u$($Q'$) $>$ \textit{minUtil} also holds. This means that $Q'$ cannot be a LUSP, and pruning item $i_k$ is valid.
\end{proof}

For example, in Definition~\ref{Determined subsequence and extension of determined subsequence}, let $\textit{minUtil} = 3$. If \textit{LBP}($S$, 2) = 5 $>$ 3 for the extension $Q$ = $\langle$$a$, $b$, $a$$\rangle$, then by Theorem~\ref{Theorem 2}, we have $u(Q)$ $\geq$ \textit{LBS}($Q$, $F$) = 5 $>$ \textit{minUtil}, indicating that $Q$ cannot be a LUSP. Moreover, any further subsequence of $Q$ that still contains item $a$ must also have utility exceeding \textit{minUtil}, so $a$ can be safely pruned from $S$.

\begin{strategy}[Expansion-Based Invalid Sequence Pruning Strategy (EBISPS)]
    \label{strategy 4}
   \rm  For a sequence $F$, if there exists a subsequence $S$ such that \textit{LBS}($S$, $F$) $>$ \textit{minUtil}, then $S$, $F$, and any sequence $Q$ generated by extending $F$ can be pruned. By Theorem~\ref{Theorem 2}, we have $u(S)$ $\geq$ \textit{LBS}($S$, $F$) $>$ \textit{minUtil}, indicating that $S$ cannot be a LUSP. Furthermore, Theorem~\ref{Theorem 3} implies $u(F)$ $>$ \textit{minUtil}, and Theorem~\ref{Theorem 4} guarantees that for any extension $Q$ of $F$, $u(Q)$ $\geq$ \textit{LBS}($Q$, $F$) $>$ \textit{minUtil}. Therefore, pruning $S$, $F$, and all their extensions is valid.
\end{strategy}
\begin{proof}
    \rm Suppose $F$ is a super-sequence and $S$ is a subsequence of $F$. If \textit{LBS}($S$, $F$) $>$ \textit{minUtil}, then by Theorem~\ref{Theorem 2}, we have $u(S)$ $\geq$ \textit{LBS}($S$, $F$) $>$ \textit{minUtil}, which means $S$ is not a LUSP. Moreover, by Theorem~\ref{Theorem 3}, the utility of $F$ also exceeds \textit{minUtil}, so $F$ is not a LUSP. Finally, Theorem~\ref{Theorem 4} ensures that for any extension $Q$ of $F$, $u(Q)$ $\geq$ \textit{LBS}($Q$, $F$) $>$ \textit{minUtil}. Thus, $Q$ cannot be a LUSP either. Consequently, pruning $S$, $F$, and all extensions $Q$ derived from $F$ is justified.
\end{proof}

For example, let \textit{minUtil} = 3 and consider $F$ = $\langle$$a$, $b$, $c$, $a$, $b$$\rangle$ with sequence-utility chain $N$ = $\langle$1, 2, 1, 2, 3$\rangle$. For the subsequence $S$ = $\langle$$a$, $b$, $c$$\rangle$, we calculate \textit{LBS}($S$, $F$) = 4 $>$ \textit{minUtil}, indicating that $S$ is not a LUSP. Next, for the extension $Q$ = $\langle$$a$, $b$, $c$, $a$$\rangle$, we have \textit{LBS}($Q$, $F$) = 6 $>$ \textit{minUtil}, so $Q$ is not a LUSP either. Further extending to $F$ yields $u(F)$ = 9 $>$ \textit{minUtil}. Therefore, $S$, $F$, and all extensions derived from $F$ can be safely pruned.

\begin{figure*}[htbp]
    \centering
    \includegraphics[width=0.90\textwidth]{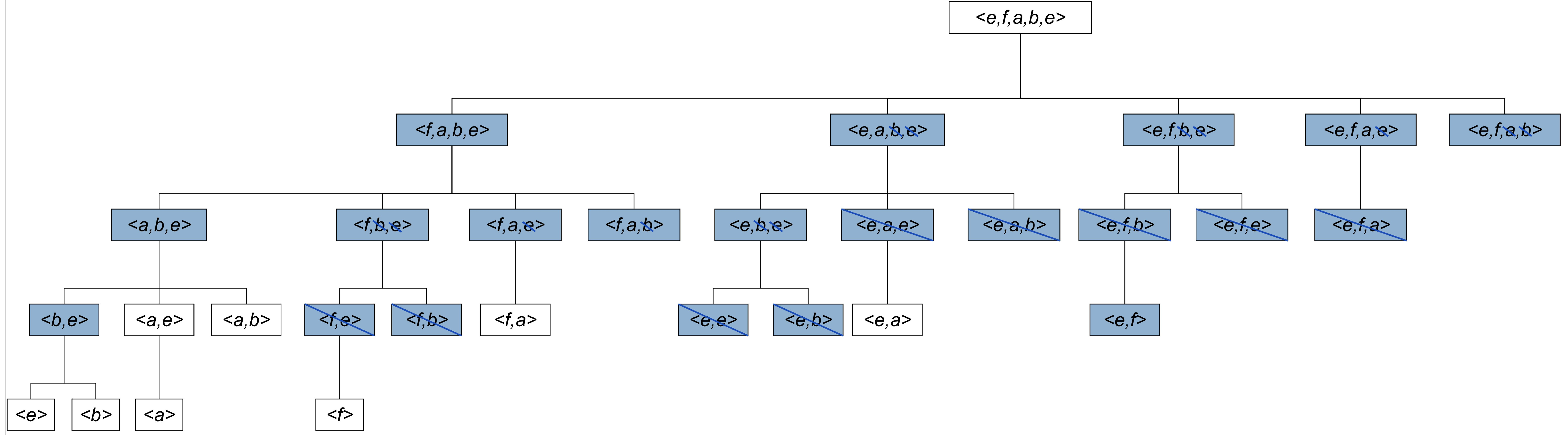} 
    \caption{Shrinkage search tree of sequence $\langle$$e$, $f$, $a$, $b$, $e$$\rangle$ when \textit{minUtil} = 3.}
    \label{fig:Shrinkage Search Tree}
\end{figure*}

\begin{figure*}[htbp]
    \centering
    \includegraphics[width=0.90\textwidth ]{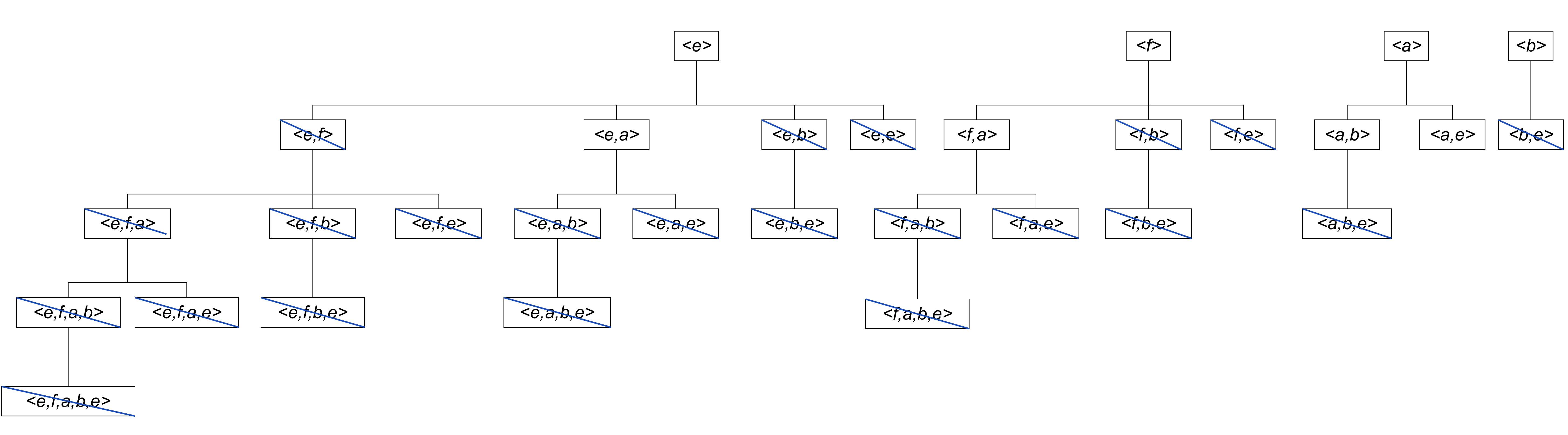}
    \caption{Extension search tree of sequence $\langle$$e$, $f$, $a$, $b$, $e$$\rangle$ when \textit{minUtil} = 3.}
    \label{fig:Extension Search Tree}
\end{figure*}

\subsubsection{Search Trees} \label{search tree}

To effectively explore the search space of low-utility candidate sequences, we propose two search tree structures: the shrinkage search tree and the extension search tree, corresponding to LUSPM$_s$ and LUSPM$_e$, respectively. In the shrinkage search tree, candidate sequences are generated by removing items from their super-sequences. Starting from the original sequence as the root, each child node is obtained by removing a single item, and the tree is recursively expanded to form a layer-wise shrinkage-based subsequence space. Fig.~\ref{fig:Shrinkage Search Tree} illustrates this construction using the sequence $\langle$$e$, $f$, $a$, $b$, $e$$\rangle$ as an example (based on Table~\ref{table2} with \textit{minUtil} = 3). To improve efficiency, LUSPM$_s$ incorporates pruning strategies \ref{strategy 2} and \ref{strategy 3}, where nodes with colored backgrounds are pruned by strategy \ref{strategy 2}, and items and sequences marked with slashes are pruned by strategy \ref{strategy 3}. In the extension search tree, candidate sequences are generated by incrementally inserting items into subsequences. Starting from the empty sequence as the root, each child node is obtained by inserting a single item, and the tree is recursively expanded to form a layer-wise extension-based sequence space. Fig.~\ref{fig:Extension Search Tree} illustrates the construction under the same settings. LUSPM$_e$ employs the pruning strategy \ref{strategy 4}, where sequences with strikethroughs (i.e., those whose utilities exceed \textit{minUtil}) are identified as invalid and pruned.

\subsection{Algorithm Details}

We present two improved algorithms, LUSPM$_s$ and LUSPM$_e$, for the efficient mining of LUSPs. We begin by describing the preprocessing steps shared by both algorithms, and then detail the procedures of each algorithm individually.

\subsubsection{Prune By Preprocessing}\label{prune by preprocessing}The complexity of the search forest in the algorithm is related to the number of sequences in the database, where each sequence corresponds to a search tree. As shown in Fig. \ref{fig:Shrinkage Search Tree}, a sequence of length $m$ can generate up to $m!$ subsequences. However, inclusion relationships may exist between search trees. For example, in Table~\ref{table2}, the tree of $S_6$ is a subtree of the tree of $S_1$. Inspired by the maximal non-mutually contained itemset in the LUIM algorithm \cite{zhang2025enabling}, we propose the concept of maximal non-mutually contained sequence to improve mining efficiency.

\begin{definition}
    \rm For a sequence database $\mathcal{D}$, a subset $M \subseteq \mathcal{D}$ is called the \textit{Maximal Non-Mutually Contained Sequence Set} (abbreviated as \textit{MaxNonConSeqSet}) of $\mathcal{D}$, if every sequence in $\mathcal{D}$ is a subsequence of some sequence in $M$, and no sequence in $M$ is a subsequence of another sequence in $M$. Each sequence in $M$ is referred to as a \textit{Maximal Non-Mutually Contained Sequence} (abbreviated as \textit{MaxNonConSeq}) of $\mathcal{D}$.
\end{definition}

In Table \ref{table2}, $S_1$, $S_2$, $S_3$, $S_4$, and $S_5$ do not mutually contain each other, whereas $S_6$ is a subsequence of $S_1$. Consequently $S_1$, $S_2$, $S_3$, $S_4$, and $S_5$ are \textit{MaxNonConSeq} and the \textit{MaxNonConSeqSet} is $M$ = \{$S_1$, $S_2$, $S_3$, $S_4$, $S_5$\}. Based on Strategy \ref{strategy 1} and the concept of \textit{MaxNonConSeqSet}, we propose Algorithm \ref{preprocess} as a preprocessing step in both LUSPM$_s$ and LUSPM$_e$. This algorithm first prunes items in the sequences of the database using Strategy \ref{strategy 1}, and then applies a deduplication step to obtain the final \textit{MaxNonConSeqSet}. The algorithm requires two inputs: the sequence database $\mathcal{D}$ and the minimum utility threshold \textit{minUtil}. For each sequence \textit{S} in $\mathcal{D}$, its sequence-utility chain \textit{utilChain} is obtained. For each item in \textit{S}, the corresponding utility sum is calculated from \textit{utilChain}. If this sum exceeds \textit{minUtil}, the item is considered invalid according to Strategy \ref{strategy 1} and is therefore removed. The remaining sequence \textit{S} is stored in \textit{maxNonConSeqSet} (lines 1–11). Finally, all sequences in \textit{maxNonConSeqSet} are checked, and any sequence that is a subsequence of another is removed to ensure that every sequence is a \textit{MaxNonConSeq} (lines 12–16).

\begin{algorithm}[!h]
    \small
    \caption{preprocess}
    \label{preprocess}
    \begin{algorithmic}[1]
    \Statex \textbf{Input:} {$\mathcal{D}$: a sequence database; \textit{minUtil}: utility threshold}
    \Statex \textbf{Output:} {$\mathit{maxNonConSeqSet}$: set of MaxNonConSeq}
    \For{\rm each sequence $S \in \mathcal{D}$}
        \State $\mathit{utilChain}$ = $\mathit{getUtilityChain}(S)$;
        \For{\rm $i$ = 0 $\textbf{to}$ $|S| - 1$}
            \State $\mathit{itemUtil}$ = $\sum_{u \in \mathit{utilChain}} u[i]$;
            \If{\rm $\mathit{itemUtil} > \mathit{minUtil}$}
                \State remove $i$th item from $S$;
               \State remove $i$th element from each $u \in \mathit{utilChain}$;
                \State $i$ = $i$ - 1;
            \EndIf
        \EndFor
        \State Add $S$ to $\mathit{maxNonConSeqSet}$;
    \EndFor
    \For{\rm $S$, $Q \in \mathit{maxNonConSeqSet}$}
        \If{\rm $S \neq Q \land Q \preceq S$}\hfill \textit{// $Q$ is the subsequence of $S$}
            \State $\mathit{maxNonConSeqSet} = \mathit{maxNonConSeqSet} \setminus \{S\}$;
        \EndIf
    \EndFor
    \State \Return $\mathit{maxNonConSeqSet}$
    \end{algorithmic}
\end{algorithm}

\subsubsection{The LUSPM$_s$ Algorithm}

To discover all LUSPs more efficiently, we propose the LUSPM$_{s}$ algorithm. It leverages Strategies \ref{strategy 2} and \ref{strategy 3} to generate shorter sequences from longer ones. The pseudocode is provided in Algorithm \ref{The LUSPM$_s$ algorithm}. LUSPM$_{s}$ employs several functions: \textit{getUtilityChain}, which obtains the utility chain of a sequence; \textit{computeUtility}, which calculates the utility of a sequence; \textit{shrinkage} (Algorithm \ref{shrinkage}), which generates shorter sequences from longer ones and finds LUSPs; \textit{shrinkage$_{depth}$} (Algorithm \ref{shrinkage_Depth}), which reduces unnecessary utility computations based on Strategy \ref{strategy 2} during shrinkage; and \textit{pruneItem} (Algorithm \ref{pruneItem}), which removes invalid items from sequences using Strategy \ref{strategy 3}.

Algorithm \ref{The LUSPM$_s$ algorithm} describes the complete process of mining LUSPs through shrinkage. It takes a sequence database $\mathcal{D}$, \textit{minUtil}, and \textit{maxLen} as inputs, and outputs all LUSPs. First, the algorithm obtains the \textit{maxNonConSeqSet} of $\mathcal{D}$ (line 1). For each sequence \textit{S} in this set, it retrieves \textit{S}'s sequence-utility chain and calculates its utility. If the utility of sequence \textit{S} is not greater than \textit{minUtil}, the \textit{shrinkage} function is called to generate subsequences of \textit{S} by removing items, thereby obtaining additional LUSPs. Moreover, if the length of \textit{S} is not greater than \textit{maxLen}, \textit{S} is also stored as a LUSP (lines 2–9). Otherwise, if the utility of \textit{S} exceeds \textit{minUtil}, \textit{shrinkage$_{depth}$} is invoked according to Strategy \ref{strategy 2}, which generates subsequences of \textit{S} by removing items and leverages the partial utility of \textit{S} to reduce unnecessary utility computations, thereby obtaining more LUSPs (line 10).

\begin{algorithm}[!h]
    \small
    \caption{LUSPM$_s$}
    \label{The LUSPM$_s$ algorithm}
    \begin{algorithmic}[1]
    \Statex \textbf{Input:} $\mathcal{D}$: a sequence database; \textit{minUtil}: utility threshold; \textit{maxLen}: length restriction.
    \Statex \textbf{Output:} \textit{LUSPs}: the complete set of LUSP.
    
    \State initialize \textit{LUSPs} = $\emptyset$, \textit{maxNonConSeqSet} = \textit{preprocess}($\mathcal{D}$)
    \For{each sequence $\textit{S}$ $\in$ \textit{maxNonConSeqSet}}
        \State \textit{utilChain} = \textit{getUtilityChain}($\textit{S}$)
        \If{\textit{computeUtility}(\textit{utilChain}) $\leq$ \textit{minUtil}}
        \State call \textit{shrinkage}($\textit{S}$, 0, \textit{LUSPs})\hfill \textit{// Algorithm \ref{shrinkage}}
        \If{$|\textit{S}|$ $\leq$ $\textit{maxLen}$}
        \State add $\textit{S}$ to \textit{LUSPs}
        \EndIf    
        \EndIf
        \State call \textit{shrinkage$_{depth}$}($\textit{S}$, $\textit{utilityChain}$, 0, \textit{LUSPs})\hfill \textit{// Algorithm \ref{shrinkage_Depth}}
    \EndFor
    \State \Return \textit{LUSPs}
    \end{algorithmic}
\end{algorithm}

Algorithm \ref{shrinkage} describes the process of generating subsequences and mining LUSPs by removing items from longer sequences. It takes three inputs: sequence \textit{S}, its removed index \textit{p}, and LUSPs. First, it removes the \textit{p}-th item from \textit{S} to generate a new sequence \textit{Q} and its sequence-utility chain. After calculating the utility of \textit{Q}, it determines whether \textit{Q} is a LUSP. If the utility of \textit{Q} isn't greater than \textit{minUtil}, \textit{shrinkage} is called to generate subsequences of \textit{Q}. If the length of \textit{Q} also satisfies the length constraint, \textit{Q} is stored as a LUSP (lines 2-10). Otherwise, if the utility of \textit{Q} is greater than \textit{minUtil}, \textit{shrinkage$_{depth}$} is invoked according to Strategy \ref{strategy 2} (lines 12–14). Finally, if \textit{p} does not point to the last item in \textit{S}, it recursively calls \textit{shrinkage} to obtain additional candidate subsequences (lines 17-19).

\begin{algorithm}[!h]
    \small
    \caption{shrinkage}
    \label{shrinkage}
    \begin{algorithmic}[1] 
    \Statex \textbf{Input:} $S$: sequence; \textit{p}: removed-index of $S$; \textit{LUSPs}: the complete set of LUSP.
    \If{\textit{p} $<$ $|S|$}
        \State $Q$ = $S$; remove the \textit{p}th item in $Q$;
        \State \textit{utilChain} = \textit{getUtilityChain}($Q$);
        \If{\textit{computeUtility}(\textit{utilChain}, $|Q|$) $\leq$ \textit{minUtil}}
                    \If{\textit{p} $<$ $|Q|$}
                \State call \textit{shrinkage}($Q$, \textit{p}, \textit{LUSPs});
            \EndIf
            \If{$|Q|$ $\leq$ \textit{maxLen}}
                \State add $Q$ to \textit{LUSPs};
            \EndIf
        \Else
            \If{\textit{p} $<$ $|Q|$} \hfill \textit{ // Algorithm \ref{shrinkage_Depth}}
                \State call \textit{shrinkage$_{depth}$}($Q$, \textit{utilityChain}, \textit{p}, \textit{LUSPs});
            \EndIf
        \EndIf
    \EndIf
    \If{\textit{p} + 1 $<$ $|S|$}
        \State call \textit{shrinkage}($S$, \textit{p} + 1, \textit{LUSPs});
    \EndIf
    \end{algorithmic}
\end{algorithm}

Algorithm \ref{shrinkage_Depth} describes the process of generating subsequences and mining LUSPs using Strategy \ref{strategy 2}. The algorithm takes four inputs: a sequence \textit{S}, its sequence-utility chain \textit{utilChain}, a removed index \textit{p}, and LUSPs. First, if \textit{p} is within bounds, it calls the \textit{pruneItem} method to remove invalid items (lines 1–3). Next, it removes the \textit{p}-th item from both \textit{S} and \textit{utilChain}, producing a new sequence \textit{Q} and a new sequence-utility chain \textit{newChain} (lines 5–7). If \textit{Q} satisfies the length constraint, the utility of \textit{newChain} is evaluated. When this utility is not greater than \textit{minUtil}, the true utility of \textit{Q} is computed to determine whether \textit{Q} is a LUSP. If the true utility also does not exceed \textit{minUtil}, \textit{Q} is stored as a LUSP, and \textit{shrinkage} is called to discover its subsequences; otherwise, \textit{shrinkage$_{depth}$} is invoked under Strategy \ref{strategy 2} (lines 8–18). If the utility of \textit{newChain} exceeds \textit{minUtil}, \textit{shrinkage$_{depth}$} is again applied to process subsequences of \textit{Q} (lines 20–22). Then, if \textit{Q} fails to meet the length constraint, \textit{shrinkage$_{depth}$} is still executed to generate its subsequences (lines 25–27). Finally, if \textit{p} does not point to the last item in \textit{S}, it recursively calls \textit{shrinkage$_{depth}$} to generate subsequences (lines 30–32).

\begin{algorithm}[!h]
    \small
    \caption{shrinkage$_{depth}$}
    \label{shrinkage_Depth}
    \begin{algorithmic}[1] 
    \Statex \textbf{Input:} $S$: sequence; \textit{utilChain}: sequence-utility chain of $S$; \textit{p}: removed-index of $S$; \textit{LUSPs}: the complete set of LUSP.
    \If{\textit{p} $<$ $|S|$}
        \State call \textit{pruneItem}($S$, \textit{utilChain}, \textit{p});\hfill \textit{// Algorithm \ref{pruneItem}}
    \EndIf
    \If{\textit{p} $<$ $|S|$}
        \State $Q$ = $S$; remove the \textit{p}th item of $Q$;
        \State \textit{newChain} = \textit{utilityChain};
        \State remove the \textit{p}th item of \textit{utilities} $\in$ \textit{newChain};
        \If{$|Q| \leq \textit{maxLen}$}
            \If{\textit{computerUtility}(\textit{newChain}, $|Q|$) $\leq$ \textit{minUtil}}
                \State \textit{chain} = \textit{getUtilityChain}($Q$);
                \If{\textit{computerUtility}(\textit{chain}, $|Q|$) $\leq$ \textit{minUtil}}
                    \State add $Q$ to \textit{LUSPs};
                    \State call \textit{shrinkage}($Q$, \textit{p}, \textit{LUSPs});
                \Else
                    \If{\textit{p} $<$ $|S|$}
                        \State call \textit{shrinkage$_{depth}$}($Q$, \textit{chain}, \textit{p}, \textit{LUSPs});
                    \EndIf
                \EndIf
            \Else
                \If{\textit{p} $<$ $|S|$}
                    \State call \textit{shrinkage$_{depth}$}($Q$, \textit{chain}, \textit{p}, \textit{LUSPs});
                \EndIf
            \EndIf
        \Else
            \If{\textit{p} $<$ $|S|$}
                \State call \textit{shrinkage$_{depth}$}($Q$, \textit{chain}, \textit{p}, \textit{LUSPs});
            \EndIf
        \EndIf
    \EndIf
     \If{\textit{p} + 1 $<$ $|S|$}
        \State call \textit{shrinkage$_{depth}$}($S$, \textit{utilChain}, \textit{p} + 1, \textit{LUSPs});
    \EndIf
    \end{algorithmic}
\end{algorithm}

Algorithm \ref{pruneItem} describes the process of pruning invalid items using pruning Strategy \ref{strategy 3}. The algorithm takes three inputs: a sequence \textit{S}, its sequence-utility chain \textit{utilChain} (or that of its super-sequence), and a removed index \textit{p}. First, it initializes \textit{removedId} and \textit{utility} (line 1). Next, for each index \textit{i} from \textit{p} to the last position in \textit{S}, the algorithm determines the initial value of \textit{utility}: when \textit{p} is 0, \textit{utility} is set to 0 (lines 3–5); otherwise, \textit{utility} is computed as the sum of the first \textit{p} entries in \textit{utilChain} (lines 6–8). Here, the value of \textit{utility} equals the sum of the utilities of the first \textit{p} items in the sequence. Then, the \textit{i}-th utility from \textit{utilChain} is added to \textit{utility} (line 9). At this step, the value of \textit{utility} equals the sum of the utilities of the first \textit{p} items and the \textit{i}-th item in the sequence. Finally, if \textit{utility} exceeds \textit{minUtil}, the corresponding item is pruned as invalid according to Strategy \ref{strategy 3} (lines 10–16).

\begin{algorithm}[!h]
    \small
    \caption{pruneItem}
    \label{pruneItem}
    \begin{algorithmic}[1]
    \Statex \textbf{Input:} $S$: sequence; \textit{utilChain}: a sequence-utility chain of $S$ (or of a super-sequence of $S$); \textit{p}: removed-index of $S$.
    \State initialize \textit{removedId} = $\emptyset$, \textit{utility} = 0;
    \For{\textit{i} = \textit{p} \textbf{to} $|S|$ - 1}
        \If{\textit{p} == 0}
            \State \textit{utility} = 0;
        \EndIf
        \If{\textit{p} $>$ 0}
            \State \textit{utility} = \textit{computeUtility}(\textit{utilChain}, \textit{p});
        \EndIf
        \State \textit{utility} = \textit{utility} + \textit{sum}(\textit{u[i]} \textbf{for} $u$ \textbf{in} \textit{utilChain});
        \If{\textit{utility} $>$ \textit{minUtil}}
            \State add \textit{i} to \textit{removedId};
        \EndIf
    \EndFor
    \For{each \textit{j} $\in$ \textit{removedId}}
        \State remove \textit{j}-th item $\in$ $S$;
        \State remove \textit{j}-th element of each \textit{utility} $\in$ \textit{utilChain};
    \EndFor
    \end{algorithmic}
\end{algorithm}

\subsubsection{The LUSPM$_{e}$ Algorithm} Unlike the LUSPM$_s$ algorithm, which generates shorter sequences by removing items from longer sequences using Strategies \ref{strategy 2} and \ref{strategy 3}, LUSPM$_e$ generates longer sequences by inserting items into shorter ones and employs Strategy \ref{strategy 4} to effectively prune a large number of invalid sequences. Algorithm \ref{The LUSPM$_{e}$ algorithm} presents the complete process of mining LUSPs through extension. It takes a sequence database $\mathcal{D}$, \textit{minUtil}, and \textit{maxLen} as inputs, and outputs all LUSPs. Specifically, it first scans $\mathcal{D}$ to obtain the MaxNonConSeqSet. For each sequence \textit{S} in this set, the algorithm retrieves its sequence-utility chain and executes an extension function (i.e., Algorithm \ref{Extension}), starting from an empty set to generate longer sequences, thereby obtaining the complete set of LUSPs.

\begin{algorithm}[!h]
    \small
    \caption{LUSPM$_e$}
    \label{The LUSPM$_{e}$ algorithm}
    \begin{algorithmic}[1] 
    \Statex \textbf{Input:} $\mathcal{D}$: a sequence database; \textit{minUtil}: utility threshold; \textit{maxLen}: length restriction.
    \Statex \textbf{Output:} \textit{LUSPs}: the complete set of \textit{LUSP}.
    \State initialize \textit{LUSPs} = $\emptyset$, \textit{maxNonConSeqSet} = \textit{preprocess}($\mathcal{D}$);
    \For{each sequence $\textit{S}$ $\in$ \textit{maxNonConSeqSet}}
        \State \textit{utilChain} = \textit{getUtilityChain}($\textit{S}$);
        \State call \textit{extension}($\textit{S}, \textit{utilChain}, \emptyset$);\hfill \textit{// Algorithm \ref{Extension}}
    \EndFor
    \State \Return \textit{LUSPs}
    \end{algorithmic}
\end{algorithm}

\begin{algorithm}[!h]
    \small
    \caption{extension}
    \label{Extension}
    \begin{algorithmic}[1] 
    \Statex \textbf{Input:} $S$: sequence; \textit{utilChain}: sequence-utility chain of $S$; $Q$: subsequence of $S$.
    \State \textit{p} = $|Q|$; \textit{S'} = S;
    \If{\textit{p} + 1 $<$ $|S|$}
        \State remove the \textit{p}th item of $S'$;
        \State \textit{newChain} = \textit{utilChain};
        \State remove the \textit{p}th element of \textit{utilities} $\in$ \textit{newChain};
        \State call \textit{extension}($S'$, \textit{newChain}, $Q$);
        \State insert the \textit{p}th item to $Q$;
        \If{\textit{computerUtility}(\textit{utilChain}, $|Q|$) $\leq$ \textit{minUtil}}
        \State call \textit{extension}($S$, \textit{utilChain}, $Q$);
            \State \textit{newChain'} = \textit{getUtilityChain}($Q$);
            \If{$|Q|$ $\leq$ \textit{maxLen}}
                \If{\textit{computerUtility}(\textit{newChain'}, $|Q|$) $\leq$ \textit{minUtil}}
                    \State add $Q$ to \textit{LUSPs};
                \EndIf
            \EndIf
        \EndIf
    \EndIf
    \end{algorithmic}
\end{algorithm}

Algorithm \ref{Extension} is the process of generating longer sequences from shorter ones and mining LUSPs by using Strategy \ref{strategy 4} to prune invalid sequences. It takes three inputs: a sequence \textit{S}, its corresponding \textit{utilChain}, and a subsequence \textit{Q}. First, the algorithm initializes \textit{p}. If \textit{p} does not point to the last item of \textit{S}, it removes the \textit{p}-th item from \textit{S} and its utility from \textit{utilChain}, and recursively calls the extension function to generate subsequences from \textit{S} (lines 1–4). Next, the algorithm inserts the \textit{p}-th item of \textit{S} into \textit{Q}. If \textit{Q}’s utility in \textit{utilChain} is not greater than \textit{minUtil}, it recursively calls the extension function to generate new sequences. If \textit{Q} satisfies the length constraint, the algorithm computes its true utility to determine whether \textit{Q} is a LUSP. If the true utility of \textit{Q} does not exceed \textit{minUtil}, it stores \textit{Q} as a LUSP (lines 6–13). Otherwise, if the true utility of \textit{Q} exceeds \textit{minUtil}, Strategy \ref{strategy 4} indicates that all sequences extended from \textit{Q} are invalid, so there is no need to call the \textit{extension} method to generate further sequences.

\subsection{Complexity Analysis} \label{Complexity Anaysis}

We finally analyze the time and space complexity of LUSPM$_s$ and LUSPM$_e$. In the database loading phase, the sequence database is scanned once to construct utility mapping tables, with a time complexity of $O$($N$ $\times$ $M$), where $N$ is the number of sequences and $M$ is the average sequence length. In the maximal sequence generation phase, deduplication and item-level pruning are performed. Strategy~\ref{strategy 1} is applied to remove items whose utility exceeds $minUtil$, reducing the effective values of $S$ and $L$. To identify the maximal non-mutually contained sequence set, pairwise sequence comparisons are required to determine containment relationships among sequences. In the worst case, this process requires $O$($S^2$) comparisons, and each comparison takes $O$($L$) time to check sequence containment, resulting in an overall time complexity of $O$($S^2$ $\times$ $L$), where $S$ denotes the number of unique sequences after deduplication and $L$ represents the average sequence length. 

In the core mining phase, the two algorithms adopt different recursive search strategies. LUSPM$_s$ enumerates candidate subsequences through the \textit{shrinkage} and $\textit{shrinkage}_{depth}$ operations and is further optimized by Strategy~\ref{strategy 2} and Strategy~\ref{strategy 3}. In contrast, LUSPM$_e$ adopts an extension-based enumeration strategy and is optimized by Strategy~\ref{strategy 4}. For a sequence of length $K$, the subsequence search space may reach $O$($2^K$) in the worst case. Thus, the worst-case time complexity of the mining phase is $O$($S$ $\times$ $2^K$ $\times$ $U$), where $U$ denotes the cost of a utility computation. Bitmap indexing is employed to accelerate sequence matching during utility computation. Overall, the worst-case time complexity of both algorithms is: $O$($N$ $\times$ $M$ + $S^2$ $\times$ $L$ + $S$ $\times$ $2^K$ $\times$ $U$). However, due to the pruning strategies, the practical search space is greatly reduced, and the average-case complexity can be approximated as $O$($N'$ $\times$ $M'$ + $S'$ $\times$ $K$ $\times$ $F$), where $N'$ $\leq$ $N$, $M'$ $\leq$ $M$, and $S'$ $<$ $S$ denote the reduced dataset and candidate space after pruning, and $F$ represents the average item frequency.

Regarding space complexity, storing the sequence database requires $O$($N$ $\times$ $M$) space, while the bitmap index requires $O$($I$ $\times$ $B$) space, where $I$ is the number of distinct items and $B$ is the bitmap length. The discovered pattern set requires $O$($P$ $\times$ $L$) space, and the recursion stack requires at most $O$($K$) space. In practice, early filtering further reduces transaction storage to $O$($N'$ $\times$ $M'$). The exponential worst-case complexity is inherent to sequential pattern mining due to the combinatorial nature of subsequence enumeration.

\section{Experimental Results and Analysis}
\label{sec: experiments}

In this section, we present the experimental evaluation of the proposed LUSPM$_{b}$, LUSPM$_{s}$, and LUSPM$_{e}$ across various datasets. We first describe the datasets and then compare LUSPM$_s$ and LUSPM$_e$ with LUSPM$_b$ in terms of runtime, memory usage, utility computation, and scalability across different settings, including varying \textit{minUtil} thresholds and sequence-length constraints. To ensure fairness, we compare the performance of the algorithms under the condition that all of them produce consistent mining results. All experiments were conducted on a Windows 10 PC with an Intel i7-10700F CPU and 16GB of RAM. The source code and datasets are available at https://github.com/Zhidong-Lin/LUSPM.

\subsection{Datasets Description}
We evaluate the proposed algorithms on several publicly available datasets, including four real-world datasets (SIGN, Leviathan, Kosarak10k, and Bible) and two synthetic datasets (Synthetic3k and Synthetic8k). These datasets span diverse scenarios, thereby enabling a comprehensive evaluation of our methods. All datasets are obtained from the SPMF repository\footnote{\url{https://www.philippe-fournier-viger.com/spmf}}. Table \ref{Tab:data} summarizes their characteristics, including the number of sequences and items, the maximum and average sequence lengths, and the total utility. For clarity, the datasets are listed in ascending order based on the number of sequences.

\vspace{-10pt}
\begin{table}[h!]
    \centering
    \scriptsize 
    \caption{The characteristics of the datasets}
    \label{Tab:data}
    \begin{tabular}{|c|>{\centering\arraybackslash}p{10mm}|>{\centering\arraybackslash}p{6mm}|>{\centering\arraybackslash}p{8mm}|>{\centering\arraybackslash}p{8mm}|>{\centering\arraybackslash}p{11mm}|}
        \hline
        \textbf{Dataset} & \textbf{Sequences} & \textbf{Items} & \textbf{MaxLen}& \textbf{AvgLen} & \textbf{TotalUtility} \\
          \hline
        SIGN & 730 & 267 & 94 & 51.997 & 634,332 \\
          \hline
        Synthetic3k & 3,196 & 75 & 36 & 36.000 & 2,156,659 \\
           \hline
        Leviathan & 5,834 & 9,025 & 72 & 33.810 & 1,199,198 \\
        \hline
        Synthetic8k & 8,124 & 119 & 22 & 22.0 & 3,413,720 \\
        \hline
        Kosarak10k & 10,000 & 10,094 & 608 & 8.140 & 1,396,290 \\
         \hline
        Bible & 36,369 & 13,905 & 77 & 21.641 & 12,817,639 \\
        \hline
    \end{tabular}
\end{table}

\subsection{Efficiency Analysis}
We first compare the efficiency of LUSPM$_b$, LUSPM$_s$, and LUSPM$_e$ under varying \textit{minUtil} values without a maximum length constraint.

\begin{figure*}[t!]
    \centering
\includegraphics[width=0.85\textwidth]{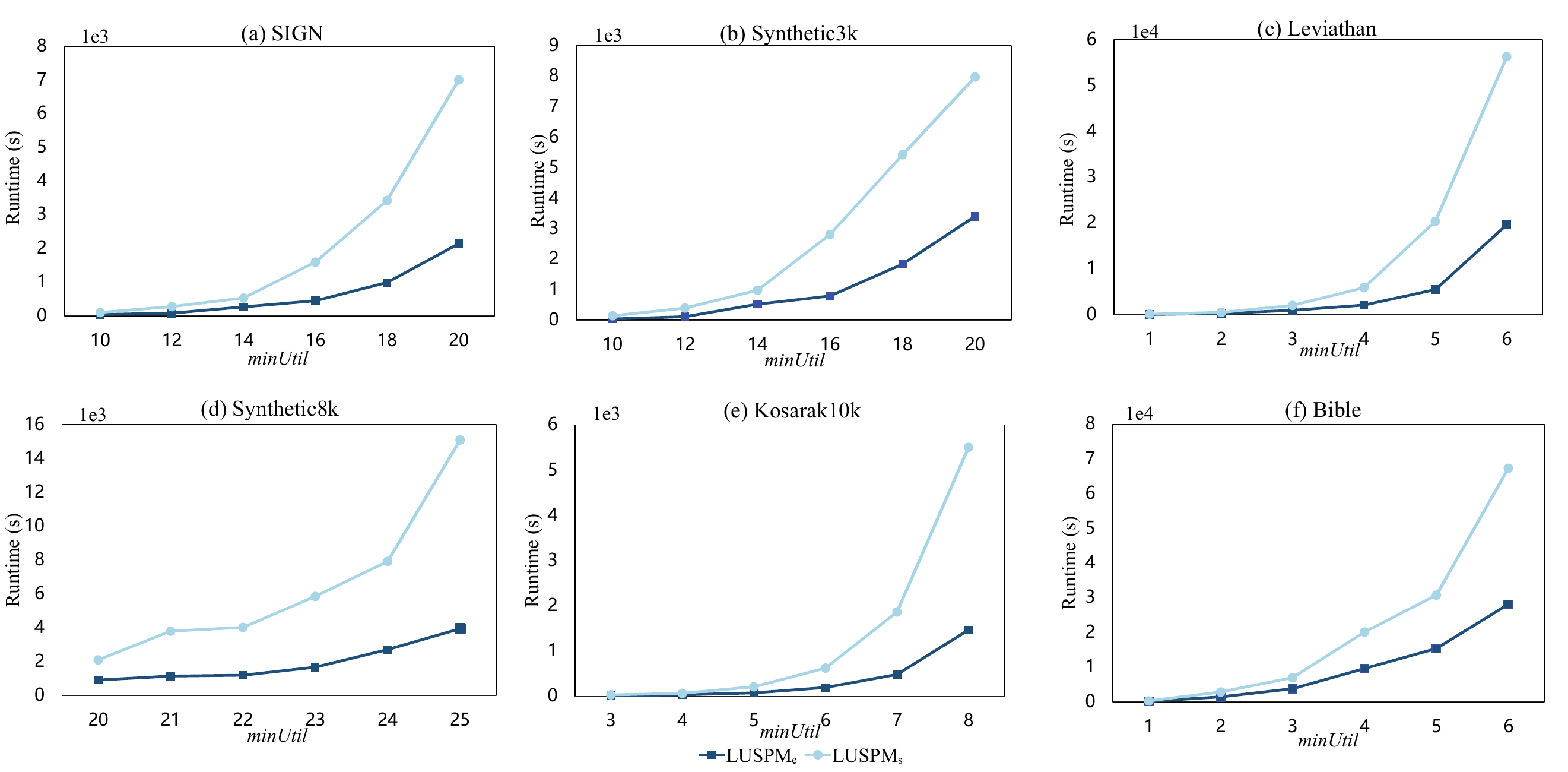}
    \caption{Time consumption analysis of LUSPM$_{s}$ and LUSPM$_{e}$.}
    \label{fig:LUSPM Time Consumption Analysis}
\end{figure*}
\begin{figure*}[t!]
    \centering
    \includegraphics[width=0.85\textwidth]{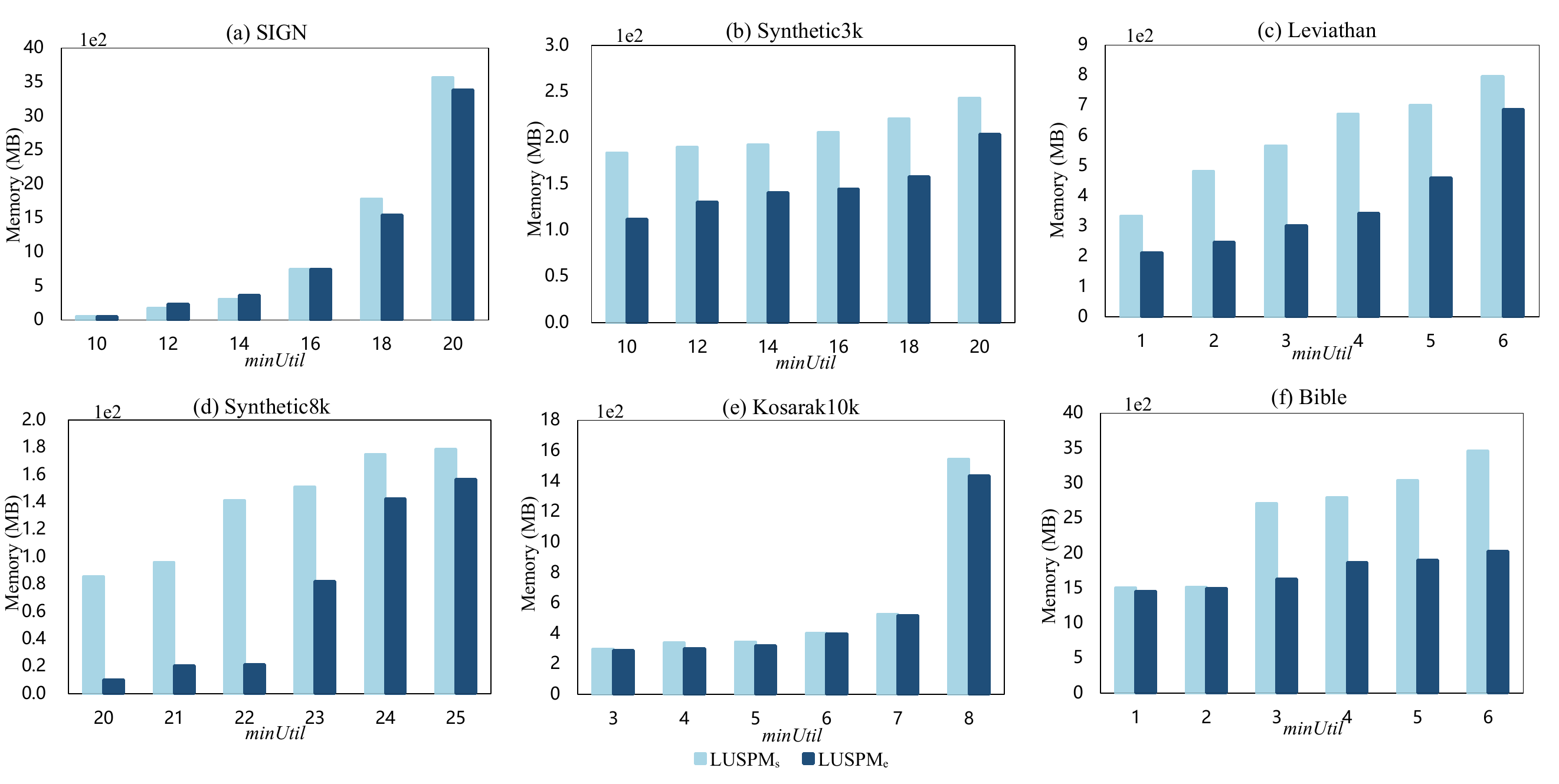}
    \caption{Memory consumption analysis of LUSPM$_s$ and LUSPM$_e$.}
    \label{fig:LUSPM Memory Consumpion Analysis}
\end{figure*}

\subsubsection{Performance Analysis of LUSPM$_b$}

In our experiment, LUSPM$_b$ failed to complete on the full datasets within two days. The most likely reason is that it relies on exhaustive enumeration to generate sequences and compute their utilities without employing any pruning strategies, resulting in excessive runtime. To further analyze its performance, we designed an additional experiment. Specifically, we tested LUSPM$_b$ on a single sequence from SIGN ($|S_1|$ = 44) and progressively increased its length from 20 to 34 items, in increments of 2. In other words, the algorithm was executed on sequences of 20, 22, 24, 26, 28, 30, 32, and 34 items. The corresponding runtimes were 3.754s, 14.949s, 62.726s, 262.677s, 1,068.073s, 4,362.813s, 16,819.089s, and 74,092.667s, respectively. As can be observed, the runtime grew exponentially with sequence length, nearly quadrupling with every two additional items. Ultimately, processing a 34-item sequence required nearly 20 hours. These results demonstrate that exhaustive enumeration is computationally impractical and highlight the necessity of pruning strategies to achieve acceptable performance.

\subsubsection{Performance Analysis of LUSPM$_{s}$ \& LUSPM$_{e}$}

We then evaluate the runtime, memory usage, and number of utility computations of LUSPM$_s$ and LUSPM$_e$ on six datasets under various \textit{minUtil} values without length constraints. Since the proposed algorithms are designed to discover LUSPs, the \textit{minUtil} parameter should be set to a sufficiently small value, representing only a very small proportion of the total database utility. Following low-utility itemset mining studies \cite{zhang2025enabling}, where \textit{minUtil} is typically set between $10^{-7}$ and $10^{-6}$ of the total database utility. Since the low-utility sequential pattern mining problem is more complex, we vary \textit{minUtil} from $10^{-8}$ to $10^{-5}$ of the total database utility to keep the runtime within a reasonable range.

\textbf{Runtime Evaluation:} Fig.~\ref{fig:LUSPM Time Consumption Analysis} shows the runtime of LUSPM$_s$ and LUSPM$_e$  on six datasets. Both algorithms can complete within a reasonable time, demonstrating significantly better runtime performance than LUSPM$_b$. Moreover, LUSPM$_e$ consistently outperforms LUSPM$_s$ across all datasets. For example, in the Synthetic3k dataset, when \textit{minUtil} = 20, the runtime of LUSPM$_s$ is approximately 7975s, whereas LUSPM$_e$ requires only 3406s, representing a reduction of about 57.3\%. In the Leviathan dataset, when \textit{minUtil} = 6, the runtime of LUSPM$_s$ is approximately 56319s, while LUSPM$_e$ requires 19650s, representing a reduction of about 65.1\%. This is probably because the pruning strategies in LUSPM$_e$ are more effective than those in LUSPM$_s$.
$\indent$  

\begin{figure*}[t!]
    \centering
    \includegraphics[width=0.85\textwidth]{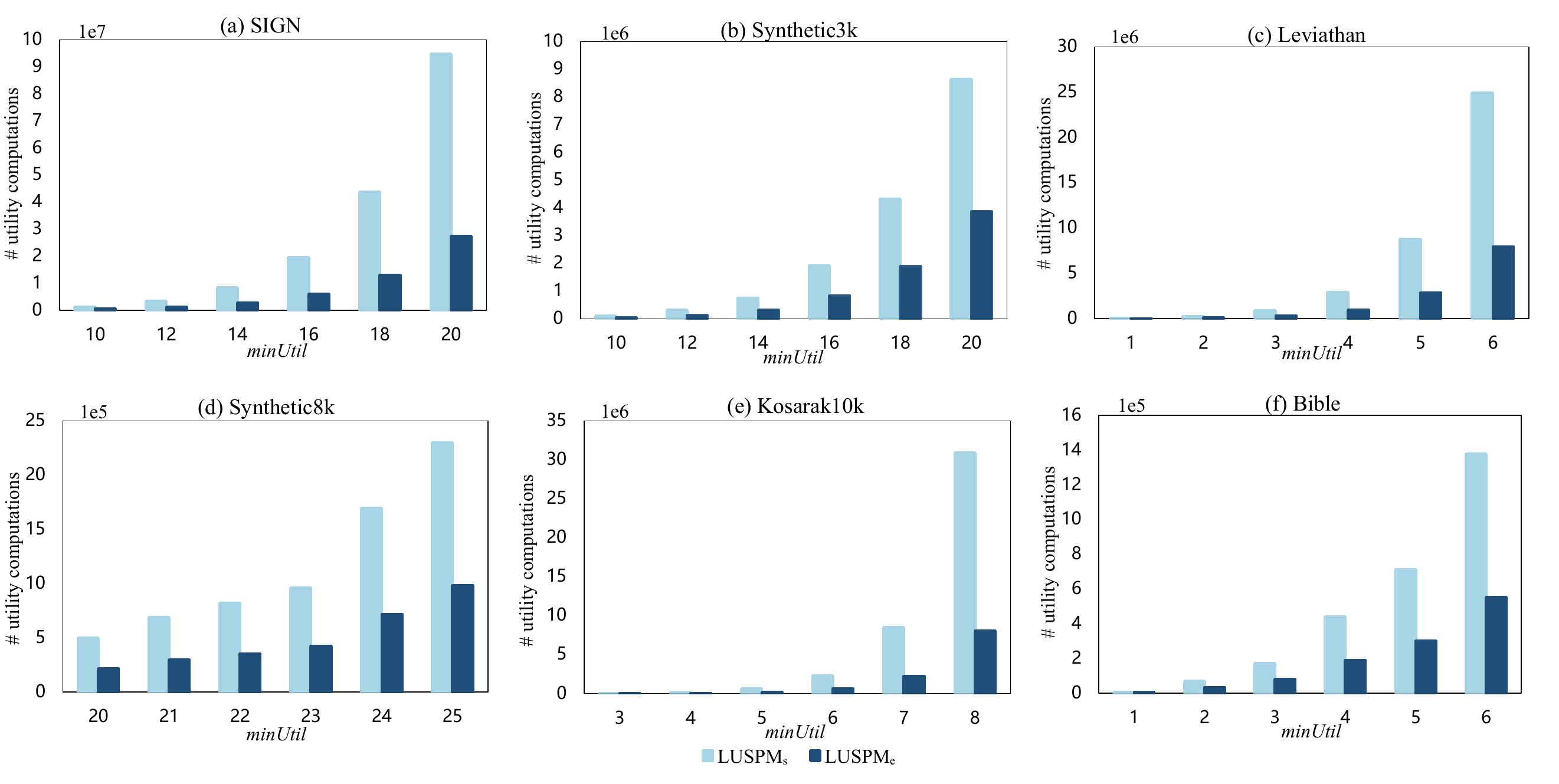}
    \caption{Number of utility computations of LUSPM$_{s}$ and LUSPM$_{e}$.}
    \label{fig:Number of utility computations}
\end{figure*}

\begin{figure*}[t!]
    \centering
    \includegraphics[width=0.85\textwidth]{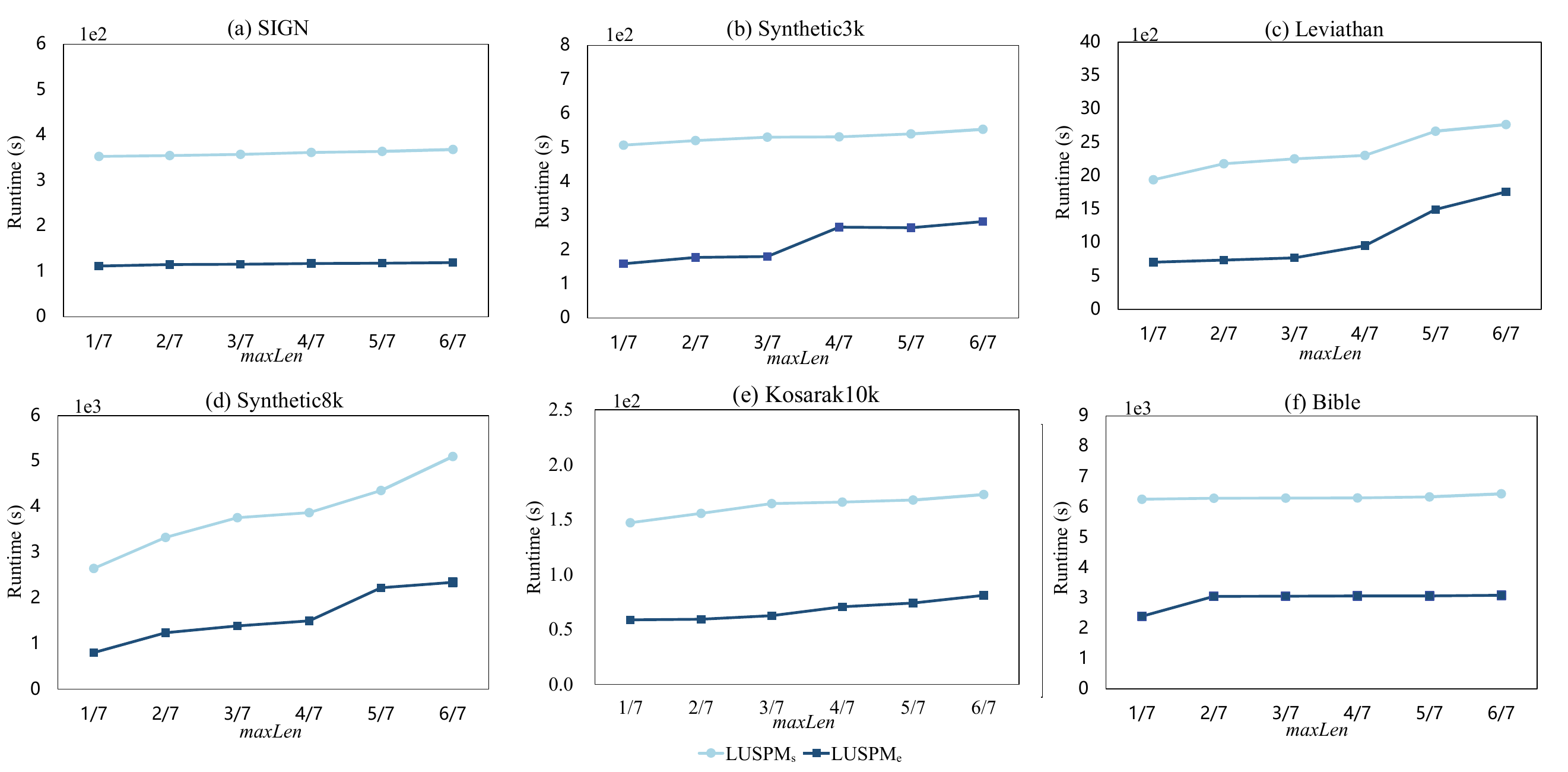}
    \caption{Runtime performance under different values of \textit{maxLen}.}
    \label{fig:LUSPM_length_time}
\end{figure*}

\textbf{Memory Evaluation:} We then compared the memory usage of the two algorithms. Fig.~\ref{fig:LUSPM Memory Consumpion Analysis} illustrates their performance across all the datasets. LUSPM$_e$ generally consumes slightly less memory than LUSPM$_s$ in most datasets. For example, in the Bible dataset, when \textit{minUtil} = 6, LUSPM$_s$ consumes approximately 3456 MB, whereas LUSPM$_e$ uses 2027 MB, representing a reduction of about 41.3\%. In the Leviathan dataset, when \textit{minUtil} = 6, LUSPM$_s$ consumes around 795 MB, while LUSPM$_e$ requires 686 MB, representing a reduction of about 13.7\%. In the SIGN dataset, when \textit{minUtil} = 20, LUSPM$_s$ consumes approximately 3568 MB, whereas LUSPM$_e$ uses 3386 MB, representing a reduction of about 5.1\%. This is probably because, although both algorithms rely on the same data structures, e.g., bit matrix, the sequence-utility chain, and MaxNonConSeqSet, the more effective pruning strategies in LUSPM$_e$ generally result in lower memory consumption.

\textbf{Utility Computations:} Fig.~\ref{fig:Number of utility computations} shows the number of utility computations for the two algorithms across all datasets. It is evident that LUSPM$_e$ consistently requires significantly fewer utility computations than LUSPM$_s$ on all datasets. For example, in Synthetic8k, when \textit{minUtil} = 25, LUSPM$_s$ performs 2,300,311 utility computations, whereas LUSPM$_e$ performs 981,329, representing a reduction of approximately 57.3\%. In Kosarak10k, when \textit{minUtil} = 8, LUSPM$_s$ performs 30,920,080 utility computations, while LUSPM$_e$ performs 8,003,661, representing a reduction of approximately 74.1\%. This is probably because the pruning strategy \ref{strategy 4} in LUSPM$_e$ significantly reduces the number of utility computations.

\subsection{Performance Under Different \textit{maxLen}s}

To further evaluate the proposed algorithms, we test LUSPM$_s$ and LUSPM$_e$ under a fixed \textit{minUtil} and varying \textit{maxLen}s. The \textit{minUtil} is set to the lower median from previous tests, corresponding to utility values of 14, 14, 3, 22, 5, and 3 for the six datasets, respectively, while \textit{maxLen} ranges from 1/7 to 6/7 of the maximum sequence length in each dataset.

\begin{figure*}[t!]
    \centering
    \includegraphics[width=0.85\textwidth]{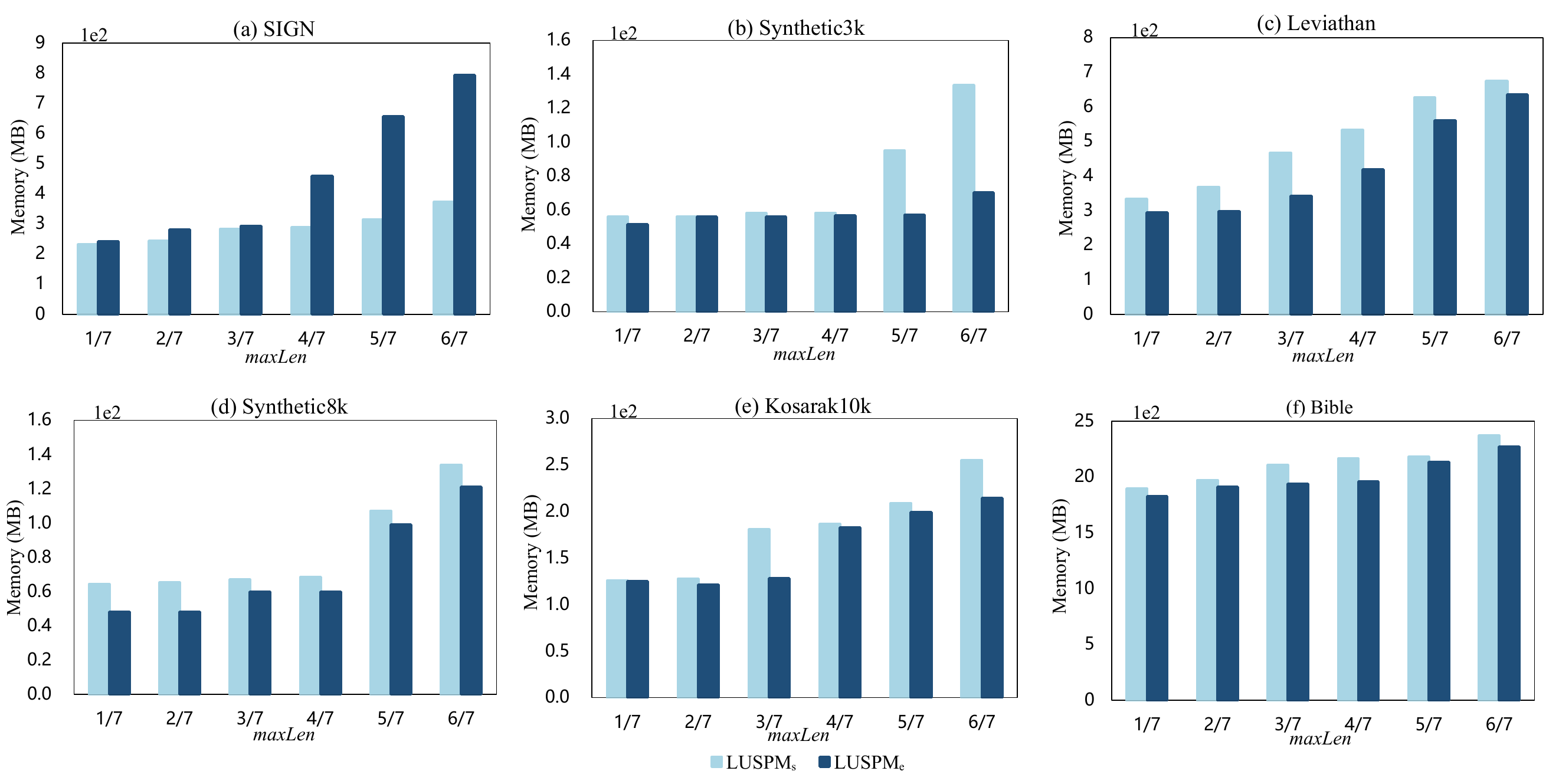}
    \caption{Memory performance under different values of \textit{maxLen}.}
    \label{fig:LUSPM_length_memory}
\end{figure*}

\textbf{Runtime Evaluation:} Fig.~\ref{fig:LUSPM_length_time} shows the runtime of the two algorithms on all datasets under various maximum length constraints. LUSPM$_e$ consistently outperforms LUSPM$_s$ across all datasets, consistent with the previous results obtained without length constraints, indicating that the pruning strategies in LUSPM$_e$ are more effective. Moreover, as the maximum sequence length increases, the runtime of both algorithms changes approximately linearly. Compared to the exponential growth observed in Fig. \ref{fig:LUSPM Time Consumption Analysis}, this change is relatively minor. The limited variation in runtime is due to Strategy \ref{strategy 1} in both algorithms, which effectively prunes many invalid items during preprocessing, thereby reducing the effective sequence length. Additionally, Fig~\ref{fig:LUSPM_length_time} shows that the runtime variation of LUSPM$_s$ is smaller than that of LUSPM$_e$. This is probably because Strategy \ref{strategy 3} in LUSPM$_s$ also efficiently prunes invalid items.

\textbf{Memory Evaluation:} Fig.~\ref{fig:LUSPM_length_memory} shows the memory consumption of the two algorithms under different maximum length constraints. Overall, LUSPM$_e$ generally consumes less memory than LUSPM$_s$. For example, in the Synthetic3k dataset, when maxLen = 6/7, LUSPM$_s$ consumes 133 MB, while LUSPM$_e$ consumes 70 MB, representing a reduction of approximately 47.3\%. This trend is consistent with the results obtained without length constraints, indicating that the pruning strategies in LUSPM$_e$ remain more effective across most datasets even as the maximum length varies. However, on the SIGN dataset, LUSPM$_s$ consumes slightly less memory than LUSPM$_e$. We speculate that this is because, under a \textit{minutil} of 14, the pruning strategies in LUSPM$_s$ are more effective for this dataset, possibly due to its sequence characteristics, which allow more items to be pruned efficiently.

\subsection{Scalability Analysis}

To assess scalability, we measured runtime and memory usage of LUSPM$_s$ and LUSPM$_{e}$ across varying dataset scales with \textit{minUtil} = 5 and no length constraint. We generate synthetic datasets with sequences of varying lengths (ranging from 50\textit{K} to 100\textit{K}) by randomly sampling rows from the six datasets in Table \ref{Tab:data} as well as from the YooChoose dataset\footnote{\url{https://archive.ics.uci.edu/dataset/352/online+retail}}. Fig. \ref{fig:Scalability Analysis} shows that both algorithms scale effectively on large datasets. Runtime increased with data size, with LUSPM$_{e}$ being faster than LUSPM$_s$, consistent with earlier results. Memory usage also increases but stabilizes when the dataset size exceeds 70\textit{K}, with LUSPM$_{e}$ maintaining a slight advantage. These results demonstrate that they are scalable to large-scale sequence datasets, making them suitable for real-world applications.

\begin{figure}[htbp]
    \centering
    \includegraphics[width=\linewidth]{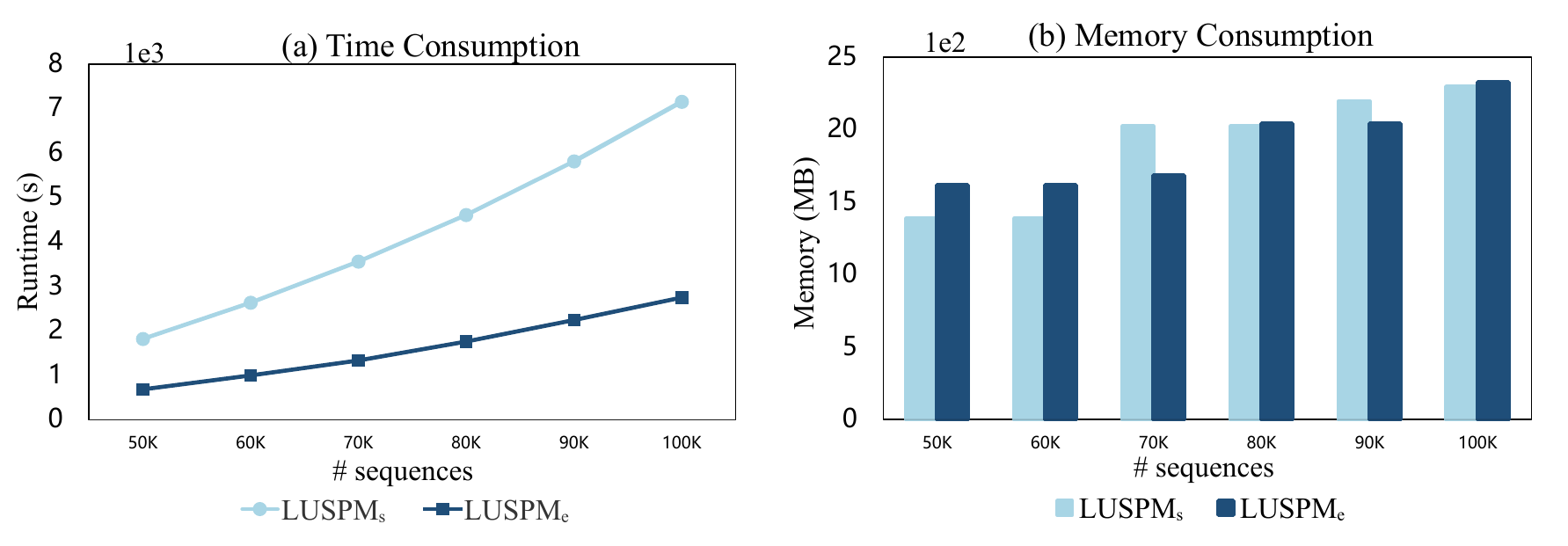}
    \caption{Scalability analysis of LUSPM$_{s}$ and LUSPM$_{e}$}
    \label{fig:Scalability Analysis}
\end{figure}

\section{Conclusion and Future work} \label{sec: conclusion}

In this paper, we first formalize the task of low-utility sequential pattern mining (LUSPM) and demonstrate that existing HUSPM algorithms are infeasible for discovering low-utility sequences. We then redefine sequence utility to capture the total utility and introduce the sequence-utility chain for efficient storage. Subsequently, we propose a baseline algorithm, LUSPM$_b$, to discover the complete set of low-utility sequential patterns. To reduce redundant processing, we further introduce the maximal non-mutually contained sequence set, along with a pruning strategy (Strategy \ref{strategy 1}). Building on this foundation, we propose two enhanced algorithms, LUSPM$_s$ and LUSPM$_e$, which significantly improve mining efficiency through three additional pruning strategies: Strategies \ref{strategy 2} and \ref{strategy 3} for LUSPM$_s$, and Strategy \ref{strategy 4} for LUSPM$_e$. Finally, extensive experiments demonstrate that both LUSPM$_s$ and LUSPM$_e$ substantially outperform LUSPM$_b$, with LUSPM$_e$ achieving the best runtime and memory efficiency while maintaining strong scalability.

Despite its contributions, this work has several limitations. The current framework assumes single-item events, whereas real-world data often involves itemsets. We plan to extend the framework to support itemset-based sequences, which would enhance its applicability and generalization. Additionally, the proposed methods focus on static datasets. In the future, we aim to explore incremental and streaming LUSPM to better handle evolving data.

\bibliographystyle{IEEEtran}
\bibliography{LUSPM.bib}

\end{document}